\documentclass[pdflatex,sn-mathphys-num]{sn-jnl}

\usepackage{graphicx}%
\usepackage{multirow}%
\usepackage{amsmath,amssymb,amsfonts}%
\usepackage{amsthm}%
\usepackage{mathrsfs}%
\usepackage[title]{appendix}%
\usepackage{xcolor}%
\usepackage{textcomp}%
\usepackage{manyfoot}%
\usepackage{booktabs}%
\usepackage{algorithm}%
\usepackage{algorithmicx}%
\usepackage{algpseudocode}%
\usepackage{listings}%
\usepackage{yfonts}%

\theoremstyle{thmstyleone}%
\newtheorem{theorem}{Theorem}
\newtheorem{proposition}[theorem]{Proposition}%

\theoremstyle{thmstyletwo}%
\newtheorem{example}{Example}%

\theoremstyle{thmstylethree}%
\newtheorem{definition}{Definition}%

\def\1{\mathbf{1}}

\def\cC{\mathscr{C}}

\def\cN{\mathscr{N}}

\def\CC{{\mathbb C}}
\def\RR{{\mathbb R}}

\def\EE{{\mathbb E}}

\def \0{{\mathbf{0}}}

\def \vecw {\mathbf{w}}
\def \vecx {\mathbf{x}}
\def \vecy {\mathbf{y}}
\def \vecz {\mathbf{z}}

\def \vecc {\mathbf{c}}
\def \vecx {\mathbf{x}}

\makeatletter
\newcommand{\superimpose}[2]{%
	{\ooalign{$#1\@firstoftwo#2$\cr\hfil$#1\@secondoftwo#2$\hfil\cr}}}
\makeatother
\newcommand{\intC}{\mathpalette\superimpose{{\hspace{0.3mm}\vspace{-0.2mm}\textsc{C}}{\int}}}

\begin{document}

\title[Properties of the core and other solution concepts of Bel coalitional
  games in the ex-ante scenario]{Properties of the core and other solution concepts of Bel coalitional games in the ex-ante scenario}


\author[1]{\fnm{Michel} \sur{Grabisch}}\email{michel.grabisch@univ-paris1.fr}

\author*[2]{\fnm{Silvia} \sur{Lorenzini}}\email{silvia.lorenzini@dottorandi.unipg.it}
\equalcont{These authors contributed equally to this work.}

\affil[1]{\orgdiv{Charles University}, \orgaddress{\state{Prague}, \country{Czech Republic}}}
\affil[1]{\orgdiv{Paris School of Economics}, \orgname{Universit\'e Paris I Panth\'eon-Sorbonne, Centre d'Economie de la Sorbonne}, \orgaddress{\state{Paris}, \country{France}}}

\affil*[2]{\orgdiv{Dipartimento di Economia}, \orgname{Universit\`a degli Studi di Perugia}, \orgaddress{ \state{Perugia}, \country{Italy}}}


\abstract{We study the properties of the core and other solution concepts of Bel
  coalitional games, that generalize classical coalitional games by introducing
  uncertainty in the framework. In this uncertain environment, we work with
  contracts, that specify how agents divide the values of the coalitions in the
  different states of the world. Every agent can have different a priori
  knowledge on the true state of the world, which is modeled through the
  Dempster-Shafer theory, while agents' preferences between contracts are
  modeled by the Choquet integral. We focus on the “ex-ante” scenario, when the
  contract is evaluated before uncertainty is resolved. We investigate the
  geometrical structure of the ex-ante core when agents have the same a priori
  knowledge which is a probability distribution. Finally, we define the
  (pre)nucleolus, the kernel and the bargaining set ({\it \`a la} Mas-Colell) in the
  ex-ante situation and we study their properties. It is found that the
  inclusion relations among these solution concepts are the same as in the
  classical case. Coincidence of the ex-ante core and the ex-ante bargaining set
  holds for convex Bel coalitional games, at the price of strengthening the
  definition of bargaining sets.

}


\keywords{Bel Coalitional Games, Ex-Ante Core, Bargaining set, Belief functions}



\maketitle


\section{Introduction}\label{sec1}
Cooperative games model situations where players cooperate in order to increase
their benefit or diminish the cost of using a service. In cooperative games with
transferable utility, it is assumed that money is transferable between players
by side payments, and that utilities are linear and increasing in money, with
the same slope for every player. As a consequence, a cooperative game with
transferable utility (TU-game) can be represented by a set function $v$,
assigning a real number $v(S)$ to every coalition $S$, called its worth, and
quantifying the benefit earned by coalition $S$  obtained through cooperation.

Most of the literature considers that $v(S)$ is perfectly known and is a
single real number. However, uncertainty or imprecision may occur in many
practical situations, making the use of the classical framework difficult or
unrealistic. There is a rather limited literature on cooperative games with
imprecision (interval-valued, or fuzzy-valued, see, e.g.,
\cite{brbralti10,mares94,weber10}), as well as a literature on games with uncertainty. Our
contribution falls into the latter category. Games with uncertainty seem to have
been first proposed by Charnes and Granot \cite{Charnes76} already fifty years
ago. Subsequent notable works have been achieved by Habis and Herings
\cite{hahe10,hahe11,hahe11a}, and Suijs et al. \cite{SUIJS99}, among others. Our
own proposition, called Bel coalitional games \cite{GL2025}, is inspired by the
concept of Bayesian coalitional games, proposed by Ieong and Shoham \cite{IS}. In
the latter approach, the value $v(S)$ depends on the state of nature, supposed to be
unknown, and players have a common prior probability distribution on the states
of nature. Then, payoff vectors are replaced by contracts, i.e., agreements
between players which stipulate the payoff of each player under each state of
nature. Ieong and Shoham distinguish between the ex-ante situation, where no
information on the true state of nature is revealed, the interim situation,
where some information is revealed, and the ex-post situation where uncertainty
is resolved. In particular they study the core of such games in the different
situations. As further explained below, our approach is similar but differ in
some respects: we do not suppose that all players have the same prior, and we do
not suppose it is a probability distribution, but rather a belief function,
which are uncertainty measures more general than probability measures,
introduced by Dempster \cite{D} and Shafer \cite{shafer}.

In \cite{GL2025}, we have studied in detail the core of Bel coalitional games in
the ex-ante and the interim situation (there is not much to say in the ex-post
situation, as this is the classical case). The present paper is a follow-up of
this work, focusing on the ex-ante situation, which is the most interesting. Our
aim is to study the geometrical structure of the ex-ante core, to propose
ex-ante versions of other classical solution concepts and to explore
their properties. Concerning the ex-ante core, its study in the general case (no
common prior, belief functions) revealed to be difficult and it was not possible
to find a necessary and sufficient condition for a contract to belong to the
core. However, such a condition is available when there is a common prior which
is a probability distribution (the Bayesian case). Our first main result in the
present paper is to give, thanks to this condition, a complete description of
the ex-ante core. It is found that it is an affine space, whose pseudo-vertices are
those of the core of the ``expected'' game, i.e., the classical cooperative game
$V$ where $V(S)$ is the expected value of $v_\omega(S)$, the uncertain game. As
for other solution concepts, we propose a definition of the bargaining set ({\it
  \`a la} Mas-Colell \cite{MC1989}), as well as definitions of the (pre)nucleolus and the
kernel. Our second main achievement is to show that the classical inclusion
relations between these three solution concepts persist in the uncertain
case. Finally, we study the case of convex Bel coalitional games, introduced in
\cite{GL2025}, and show that the ex-ante bargaining set coincides with the core
for convex games, at the price of strengthening its definition: the ex-ante
strong bargaining set can be seen as an intermediate between the Mas-Colell
definition and the original Davis and Maschler definition \cite{dama67}.

The paper is organized as follows. Section~\ref{sec:prel} introduces the
necessary material (Dempster-Shafer theory, cooperative games) in order to
introduce Bel coalitional games and the main results obtained for the ex-ante
core from \cite{GL2025} in Section~\ref{sec:bel}. We also introduce the Shapley
contract, which is simply the contract giving the Shapley value of the game in
each state of nature. Section~\ref{sec:struc}
studies the geometry of the ex-ante core. In Section~\ref{sec:barg}, we
introduce the ex-ante bargaining set, the ex-ante (pre)nucleolus and the ex-ante
kernel, and study their inclusion properties. In Section~\ref{sec:conv}, we
study the properties of ex-ante convex Bel coalitional games, and show that the
Shapley contract lies in the ex-ante core of the game, and that the ex-ante
strong bargaining set coincides with the ex-ante core for such games.

\section{Preliminaries}\label{sec:prel}
\subsection{Dempster-Shafer Theory}
Let $\Omega = \{\omega_1,\omega_2,\ldots,\omega_d\}$ be a finite non-empty set of states of the world. We denote by $2^\Omega$ and $\mathbb{R}^\Omega$ the power set of $\Omega$ and the set of all random variables on $\Omega$, respectively.  A \textit{capacity} is a set function $\nu:2^\Omega\rightarrow [0,1]$ satisfying:
\begin{itemize}
\item[\it (i)] $\nu(\emptyset)= 0$ and $\nu(\Omega)= 1$;

\item[\it (ii)] $\nu(A)\le\nu(B)$, whenever $A \subseteq B$.
\end{itemize}
A \textit{belief function} (see \cite{D,shafer}) is a particular capacity satisfying:
\begin{equation}\label{eq:tm}
\nu\left(\bigcup^k_{i=1} E_i\right)\ge \sum_{\emptyset \ne I \subseteq
  \{1,\ldots,k\}} (-1)^{|I|+1} \nu\left(\bigcap_{i\in I} E_i\right),
\end{equation}
for all $ k\ge 2$ and $E_1,\ldots,E_k\in 2^\Omega$.
Notice that a probability measure $\pi$ on $2^\Omega$ is a particular belief function where (\ref{eq:tm}) holds with equality.

A belief function $\nu$ is associated with a dual set function $\overline{\nu}$ on $2^\Omega$ called \textit{plausibility function} and defined, for all $A\in 2^\Omega$, as $\overline{\nu}(A) =1-\nu(A^c).$
Both belief and plausibility functions are completely characterized by their
\textit{M\"obius inverse} (or \textit{mass function}) $m_\nu: 2^\Omega
\rightarrow [0,1]$ via the formulas
\[
\nu(A)=\sum_{B\subseteq A} m_\nu(B)\quad\mbox{and}\quad
\overline{\nu}(A)=\sum_{B\cap A \neq \emptyset} m_\nu(B).
\]
A subset of $\Omega$ with strictly positive mass function is called a {\it focal element} and we denote by $\mathcal{F}_{\nu} = \{E \in 2^\Omega \,:\, m_\nu(E) > 0\}$ the set of focal elements of $\nu$.

Given a belief function $\nu$ and $X \in \mathbb{R}^\Omega$, the {\it Choquet expectation} of $X$ with respect to $\nu$ is defined through the Choquet
integral (see, e.g., \cite{MG}):
$$\mathbb{C}_{\nu}(X):=\intC X \,d\nu =\sum^d_{i=1}[X(\omega_{\sigma(i)})-X(\omega_{\sigma(i+1)})]\nu(E_i^\sigma),$$ 
where $\sigma$ is a permutation of $\{1,\ldots,d\}$ such that
$X(\omega_{\sigma(1)})\ge \cdots \ge X(\omega_{\sigma(d)})$,
$E_i^\sigma=\{\omega_{\sigma(1)},\ldots,\omega_{\sigma(i)}\}$, for
$i=1,\ldots,d$, and $X(\omega_{\sigma(d+1)})=0$. When $\nu$ reduces to a
probability measure $\pi$, we have that $\mathbb{C}_\pi(X)=\mathbb{E}_\pi(X)$,
the expected value of $X$.

Let us summarize some useful properties of the Choquet integral with respect to a belief function $\nu$:
\begin{itemize}
\item \textit{Positive homogeneity} (it holds also for $\overline{\nu}$):
  $\displaystyle\intC \alpha X \,d\nu =\alpha \intC X \,d\nu$, with $\alpha\ge
  0$;
\item \textit{Asymmetry}: $\displaystyle\intC -X \,d\nu=- \intC X \,d\overline{\nu}$;
\item \textit{Monotonicity} (or \textit{nondecreasingness}) with respect to the integrand (it holds also for $\overline{\nu}$): $$X\le Y \implies \displaystyle\intC X \,d\nu\le \intC Y \,d\nu;$$
\item \textit{Subadditivity}: $\displaystyle\intC X \,d\overline{\nu}+\intC Y \,d\overline{\nu}\ge \intC X+Y \,d\overline{\nu}$:
\item \textit{Superadditivity}: $\displaystyle\intC X \,d\nu+\intC Y \,d\nu\le \intC X+Y \,d\nu$; 
\item {\it Continuity:} $\displaystyle\intC Xd\nu$ is a
  continuous function w.r.t. $X$.
\end{itemize}

\subsection{Coalitional Games}\label{sec:coga}
The aim of the theory of coalitional games (or cooperative games, TU-games) is the study of payoff division
within groups of agents. A game assigns to each group of agents, called a
coalition, a payoff (or utility), representing benefit or cost saving resulting
from the cooperation of the agents in that coalition.
\begin{definition}
    A \textit{coalitional game} is a pair $(N,v)$ where
    \begin{itemize}
        \item $N= \{1,2,\ldots,n\}$ is a set of agents;
        \item $v:2^N\rightarrow\mathbb{R}$ is a function that assigns to each
          group of agents $S\subseteq N$ a real-valued payoff, and $v(\emptyset)=0$.
    \end{itemize}
\end{definition}
If the context makes $N$ clear, we can refer to the coalitional game simply as $v$. An outcome or payoff vector $\textbf{x}$ is a vector in $\mathbb{R}^N$ defining the payoff given to any agent in $N$.

\begin{definition}\label{def:core}
The \textit{core} of a coalitional game $(N,v)$ is defined by 
    $$C(N,v)=\{\textbf{x}\in\mathbb{R}^N:\sum_{i\in S} \textbf{x}_i \ge v(S),\,\forall\, S\subseteq N,\,\sum_{i\in N} \textbf{x}_i = v(N)\}.$$
\end{definition}
For a given game, the core may
be empty, i.e., there may be no payoff vector that satisfies the stated
condition. 



\section{Bel coalitional games}\label{sec:bel}
In this section we present coalitional games with uncertainty and Bel coalitional games, the latter introduced in \cite{GL2025}.
\subsection{Coalitional games with uncertainty}
Up to our knowledge, there is not a large literature dealing
  with cooperative games incorporating uncertainty or imprecision. The simplest
  models consider that $v(S)$ is an imprecise quantity, either modeled by an
  interval \cite{brbralti10}, an ellipsoid \cite{weber10}, or a fuzzy number
  \cite{mares94}. Considering $v(S)$ as a random variable seems to have been
  first introduced by Charnes and Granot \cite{Charnes76}. A more involved
  framework has been proposed later by Suijs et al. \cite{SUIJS99}, where
  players have a set of possible actions, giving rise to a payoff. More
  recently, in a series of papers, Habis and Herings consider cooperative games
  as random variables, with players having continuous and state-separable utility
  functions \cite{hahe10,hahe11,hahe11a}. Their study concentrates on the weak
  sequential core, a notion originally proposed by Kranich et al. \cite{krpepe05}.

Perhaps the closest work to ours is the one of Ieong and Shoham
  \cite{IS}, who propose Bayesian coalitional games, whose definition is given
  below.

\begin{definition}
    A \textit{Bayesian coalitional game} is given by $(N,\Omega,\pi,(\mathcal{I}_j)_{j\in N},(\succcurlyeq_j)_{j\in N})$ where
    \begin{itemize}
        \item $N= \{1,2,\ldots,n\}$ is a set of agents;
        \item $\Omega= \{\omega_1,\omega_2,\ldots,\omega_d\}$ is a set of possible worlds, where each world specifies a coalitional game defined over $N$;
        \item $\pi$ is a common prior probability distribution over the set of
          worlds $\Omega$;
        \item $\mathcal{I}_j$ is agent $j$'s partition of the worlds in $\Omega$
          (information set);
        \item $\succcurlyeq_j$ describes agent $j$'s preference over distributions of payoffs, for each agent $j$.
    \end{itemize}
\end{definition}
The common prior probability distribution on the states of
  nature is the feature which distinguishes Bayesian coalitional games from
  games with uncertainty of Halbis and Herings. The information sets of the
  players are publicly known, and represent the way the players perceive
  uncertainty: two states of nature in a block of the partition $\mathcal{I}_j$
  are indistinguishable by player $j$. In addition, Ieong and Shoham considers
  three different situations: ex-ante, interim and ex-post. In the ex-ante
  situation, players have no information about the true state of nature, and
  their preference on uncertain payoffs (called contracts) is governed by
  expected utility. In the interim situation, to each player is revealed in
  which block of their information set the true state of nature lies. The
  resulting notion of core (set of unblocked payoffs) is quite complex because
  of the public character of the information. In the
  ex-ante situation, uncertainty is resolved by revealing the true state of
  nature, and we are back to classical cooperative games.

\subsection{Bel coalitional games}
Bel coalitional games, proposed by the authors in \cite{GL2025}, are largely
inspired by Bayesian coalitional games, although some aspects are simplified,
while others are generalized. Their formal definition is given below.
\begin{definition}
    A \textit{Bel coalitional game} is given by $(N,\Omega,(m_{\nu_j})_{j\in N},(v_\omega)_{\omega\in\Omega})$ where
    \begin{itemize}
        \item $N= \{1,2,\ldots,n\}$ is a set of agents;
        \item $\Omega= \{\omega_1,\omega_2,\ldots,\omega_d\}$ is a set of
          possible worlds, where each world $\omega$ specifies a coalitional game $v_\omega$ defined over $N$;
         \item $m_{\nu_j}:2^\Omega\rightarrow [0,1]$ is agents $j$'s mass distribution over the worlds in $\Omega$, and $\nu_j$ is the corresponding belief function, such that $\bigcup\mathcal{F}_{\nu_j}=\Omega$ (the
             focal elements cover $\Omega$);
       \item $v_\omega: 2^N \rightarrow \mathbb{R}$ is a coalitional game that assigns to each coalition $S\subseteq N$ its value in world $\omega$.
    \end{itemize}
\end{definition}

We do not assume any more that agents have the same prior, and we use the richer
notion of belief functions to model uncertainty. This implies that the Choquet
integral has to be used for computing expected values of uncertain payoffs
(contracts). 
On the other hand, we do not
consider information sets any more, meaning that agents perceive all states of
nature. 

Although in \cite{GL2025} we consider the three situations
  (ex-ante, interim, ex-post), in this paper we focus on the ex-ante
  situation. The reason is that, unlike for Bayesian games, the interim
  situation does not differ so much in terms of results, and the ex-post
  situation coincides with classical cooperative games.

\medskip

In the ex-ante situation, as the worth of a game is uncertain,
  one has to consider as well uncertain payoffs. They specify
  for each state of nature how the worth of a coalition should be divided. Borrowing the terminology of
  Bayesian games, we call them {\it contracts}, showing that these are
  agreements between players of a coalition, bound before uncertainty is resolved.

\begin{definition}
A {\it contract} among agents of coalition $S\subseteq N$ ($S$-contract) is a
mapping from the set of worlds to payoff vectors, $\textbf{c}^S :
\Omega\rightarrow\mathbb{R}^S$, such that $\textbf{c}_j^S(\omega)$ denotes the
payoff to agent $j\in S$ in world $\omega$. An $S$-contract is {\it feasible} if for all worlds $\omega$ $$\sum_{j\in S} \textbf{c}_j^S(\omega) \le v_\omega(S).$$
\end{definition}
 For any individual agent $j$, contracts play no role and agent $j$ simply receives
its own payoff $v_\omega(\{j\})$ in world $\omega$. As in \cite{IS}, we focus only on feasible
contracts and we study the stability of $N$-contracts (grand contracts). We say that a (feasible) grand contract $\textbf{c}^N$ is {\it efficient at $\omega$} if $\sum_{j\in
  N}\textbf{c}^N_j(\omega)=v_\omega(N)$. Therefore, it is not efficient if $\sum_{j\in
  N}\textbf{c}^N_j(\omega)<v_\omega(N)$.
We say that $\textbf{c}^N$ is {\it efficient} if it is efficient at every
$\omega\in\Omega$.

The next step is to define preference over contracts. Let
  $\vecc^S,\vecc'^S$ be two contracts for $S$. We have that $\vecc^S\succcurlyeq_S
  \vecc'^S$ if every agent $j$ in $S$ prefers the payoff distribution $\vecc^S_j$ to $\vecc'^S_j$, which we
  denote by $\vecc^S_j\succcurlyeq_j\vecc'^S_j$. We write  $\vecc^S\succ_S
  \vecc'^S$ if the preference is strict for each agent in $S$.  As explained above,
  we consider that preference over payoff distributions is given by
  the Choquet integral, i.e., for any agent $j\in S$: 

\[
\vecc^S_j\succcurlyeq_j\vecc'^S_j \Leftrightarrow \mathbb{C}_{\nu_j}(\textbf{c}_j^{S})\ge\mathbb{C}_{\nu_j}({\textbf{c}'}_j^{S}).
\]

\begin{example}
 Classical point-valued solution concepts can be used to
    define particular grand contracts, for example the Shapley value \cite{sha53}.  Given a
    Bel coalitional game $(N,\Omega,(m_{\nu_j})_{j\in
      N},(v_\omega)_{\omega\in\Omega})$, we define the {\it Shapley contract}
    as $$\textbf{c}^N_{Sh}(\omega)=\phi(v_\omega),$$ where
$$\phi_i(v_\omega) = \sum_{S \subseteq N \setminus \{i\}} \frac{|S|! (n - |S| - 1)!}{n!} \left( v_\omega (S \cup \{i\}) - v_\omega(S) \right).$$

The Shapley contract inherits the properties of the classical Shapley value,
    most notably efficiency (see definition above). Other classical properties
    are the null player property and symmetry. Adapting them to our framework,
    we may say that a player $i$ is {\it null} if for any state $\omega$, $i$ is
    null in $v_\omega$, i.e., $v_\omega(S\cup\{i\})=v_\omega(S)$, for all
    $S\subseteq N\setminus \{i\}$. Similarly, players $i$ and $j$ are {\it
      symmetric} if for any state $\omega$ we have
    $v_\omega(S\cup\{i\})=v(S\cup\{j\})$, for all $S\subseteq
    N\setminus\{i,j\}$. Then the Shapley contract satisfies the null property
    (for any null player $i$, $(\vecc^N_{Sh})_i(\omega)=0$ for all
    $\omega\in\Omega$) and the symmetry property (if $i,j$ are symmetric
    players,  then $(\textbf{c}^N_{Sh})_i(\omega)=(\textbf{c}^N_{Sh})_j(\omega)$
    for every $\omega\in\Omega$). Finally, the Shapley contract can be
    characterized by efficiency, the null player property, symmetry, and
    additivity (defined statewise in the obvious way). 
\end{example}

We introduce the notion of blocking of a grand contract by a coalition $S$.

\begin{definition}
Given a Bel coalitional game $(N,\Omega,(m_{\nu_j})_{j\in
  N},(v_\omega)_{\omega\in\Omega})$ and a grand contract $\textbf{c}^N$, a
coalition $S$ {\it ex-ante blocks} $\textbf{c}^N$ if there exists an $S$-contract $\textbf{c}^S$ such that
$$\textbf{c}^S\succ_S\textbf{c}_S^N,$$
where with $\textbf{c}_S^N$ we mean $(\textbf{c}^N_j)_{j\in S}$.
\end{definition}
The ex-ante core of Bel coalitional games is defined using the notion of blocking.

\begin{definition}
The {\it ex-ante core} of a Bel coalitional game is the set of all feasible grand contracts $\textbf{c}^N$ for which no coalition $S\subseteq N$ ex-ante blocks $\textbf{c}^N$.
\end{definition}

The properties of the ex-ante core of Bel coalitional games have been studied at length in
  \cite{GL2025}, in particular the conditions for nonemptiness. Its study reveals to be difficult in the general case, when
  considering different priors, which are belief functions, and no
  characterization result has been obtained. When all priors are identical, the
  following sufficient condition has been obtained and will be used in
  Section~\ref{sec:conv}.
  \begin{theorem}\label{th:conv}
    Given a Bel coalitional game $(N,\Omega,m_\nu,(v_\omega)_{\omega\in\Omega})$
    with all identical priors, and a grand contract $\vecc^N$, if for all
    $S\subseteq N$ it holds
    \[
\CC_\nu(v_\omega(S))\leq \sum_{i\in S}\CC_\nu(\vecc^N_i),
\]
then the grand contract $\vecc^N$ is in the ex-ante-core of the Bel coalitional game.
    \end{theorem}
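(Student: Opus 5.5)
We argue by contradiction. Suppose the grand contract $\vecc^N$ satisfies the hypothesis of Theorem~\ref{th:conv}, yet some coalition $S\subseteq N$ ex-ante blocks it. Then there is an $S$-contract $\vecc^S$ such that $\CC_\nu(\vecc^S_j)>\CC_\nu(\vecc^N_j)$ for every $j\in S$. The contract $\vecc^S$ is feasible, as are all contracts considered here, so $\sum_{j\in S}\vecc^S_j(\omega)\le v_\omega(S)$ for every $\omega$. The plan is to chain the Choquet-integral properties listed in Section~\ref{sec:prel} so as to contradict the inequality $\CC_\nu(v_\omega(S))\le\sum_{i\in S}\CC_\nu(\vecc^N_i)$.

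The key steps, in order, are as follows. First, sum the strict inequalities over $j\in S$; this requires $S\neq\emptyset$, and the empty coalition cannot block anyway:
\[
\sum_{j\in S}\CC_\nu(\vecc^N_j)<\sum_{j\in S}\CC_\nu(\vecc^S_j).
\]
Second, use superadditivity of the Choquet integral with respect to a belief function, extended by induction to finitely many summands, to obtain
\[
\sum_{j\in S}\CC_\nu(\vecc^S_j)\le \CC_\nu\Bigl(\sum_{j\in S}\vecc^S_j\Bigr).
\]
Third, apply monotonicity in the integrand together with feasibility to get
\[
\CC_\nu\Bigl(\sum_{j\in S}\vecc^S_j\Bigr)\le \CC_\nu(v_\omega(S)),
\]
where $v_\omega(S)$ is regarded as the random variable $\omega\mapsto v_\omega(S)$. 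Chaining these three inequalities gives $\sum_{j\in S}\CC_\nu(\vecc^N_j)<\CC_\nu(v_\omega(S))$, which contradicts the hypothesis. Hence no coalition blocks $\vecc^N$.

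We also need $\vecc^N$ to be a feasible grand contract, since the ex-ante core is defined among feasible grand contracts. We take this as part of the standing assumption, since the paper only considers feasible contracts. For singletons, where $S$-contracts play no role, the same argument applies verbatim, because player $j$'s only option is $v_\omega(\{j\})$.

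The main obstacle, though a mild one, is ensuring that the direction of the inequality is the right one. Superadditivity holds for $\nu$, which is a belief function, not for $\overline{\nu}$, and this is exactly what allows $\sum_{j\in S}\CC_\nu(\vecc^S_j)$ to be bounded above by the integral of the sum. Everything else is routine.
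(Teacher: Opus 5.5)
Your proof is correct. The paper does not reprove this theorem; it imports it from \cite{GL2025}. Your chain of inequalities (feasibility with monotonicity, superadditivity of $\CC_\nu$ for a belief function, then the strict blocking inequalities summed over $S$) is exactly the one the paper uses in its proof that the ex-ante kernel is contained in the ex-ante bargaining set. Your remark that feasibility of $\vecc^N$ must be taken as a standing assumption is also accurate, since the hypothesis alone does not imply it.
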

Now, if the common prior is a probability distribution, the above condition
becomes necessary and sufficient (also proved in \cite{IS}).  

\begin{theorem}\label{thprob}
Given a Bel coalitional game $(N,\Omega,m_{\nu},(v_\omega)_{\omega\in\Omega})$, with $\nu=\pi$, then a grand contract $\textbf{c}^N$ is in the ex-ante core of the game if and only if, for all $S\subseteq N$
\begin{equation}\label{eq:exac}
  \mathbb{E}_{\pi}(v_\omega(S))\le\mathbb{E}_{\pi}(\sum_{i\in S}\textbf{c}^N_i).
\end{equation}
\end{theorem}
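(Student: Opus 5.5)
Sufficiency follows from Theorem~\ref{th:conv}. A probability measure $\pi$ is a belief function, and $\CC_\pi=\EE_\pi$ is linear. Hence $\sum_{i\in S}\CC_\pi(\vecc^N_i)=\EE_\pi(\sum_{i\in S}\vecc^N_i)$, and condition (\ref{eq:exac}) is exactly the hypothesis of Theorem~\ref{th:conv}. Since Theorem~\ref{th:conv} is stated earlier, I would simply invoke it. It is also easy to argue directly. Suppose $S$ blocks via a feasible $\vecc^S$. Then $\EE_\pi(\vecc^S_j)>\EE_\pi(\vecc^N_j)$ for every $j\in S$. Summing over $j$, using linearity and feasibility, gives
\[
\EE_\pi(v_\omega(S))\ge \EE_\pi\Big(\sum_{j\in S}\vecc^S_j\Big)>\EE_\pi\Big(\sum_{j\in S}\vecc^N_j\Big),
\]
which contradicts (\ref{eq:exac}).

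For necessity, I argue by contraposition. Suppose some $S$ satisfies $\EE_\pi(v_\omega(S))>\EE_\pi(\sum_{i\in S}\vecc^N_i)$. I will build a feasible $S$-contract that strictly improves every member of $S$. Let $\delta>0$ be the gap and $s=|S|$. Define
\[
\vecc^S_j(\omega)=\vecc^N_j(\omega)+\frac{1}{s}\Big(v_\omega(S)-\sum_{i\in S}\vecc^N_i(\omega)\Big),\qquad j\in S.
\]
This contract is efficient, hence feasible, at every $\omega$. Moreover $\EE_\pi(\vecc^S_j)=\EE_\pi(\vecc^N_j)+\delta/s>\EE_\pi(\vecc^N_j)$ for all $j\in S$. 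So $S$ ex-ante blocks $\vecc^N$, and therefore $\vecc^N$ is not in the ex-ante core.

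One subtle point is the case $S=\{j\}$. The paper says contracts play no role for singletons, and agent $j$ simply receives $v_\omega(\{j\})$. My construction then gives exactly $\vecc^S_j=v_\omega(\{j\})$, so the argument is consistent with that convention.

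The main obstacle is in the necessity direction. The shortfall $v_\omega(S)-\sum_{i\in S}\vecc^N_i(\omega)$ may be negative in some states, so I must make sure the redistribution still keeps the $S$-contract feasible state by state. The uniform-split construction handles this, because it is efficient in each state regardless of the sign of the shortfall. Strict improvement then relies only on linearity of $\EE_\pi$, which is exactly where the assumption $\nu=\pi$ is needed; for a general belief function this step fails.
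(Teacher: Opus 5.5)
Your proof is correct and follows the route the paper indicates. The paper gives no proof of its own: it obtains sufficiency as the specialization of Theorem~\ref{th:conv} to $\nu=\pi$, where $\CC_\pi=\EE_\pi$ is linear, and it cites Ieong and Shoham and the authors' earlier work for necessity. Your contrapositive construction of an efficient blocking $S$-contract, splitting the state-wise shortfall $v_\omega(S)-\sum_{i\in S}\vecc^N_i(\omega)$, is the standard argument for that direction, and it mirrors the contracts the paper builds later, e.g.\ in the proof of Theorem~\ref{thcorebarg}.
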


\section{Structure of the ex-ante core}\label{sec:struc}
In order to study the geometric structure of the ex-ante core, it is necessary
to have a characterization result, i.e., an if-and-only-if condition for a
grand contract to belong to the core. As explained above, this has been obtained
only if  one supposes the same prior for
all agents, which has to be a probability measure. Therefore, throughout this
section we consider a Bel coalitional game
$(N,\Omega,m_\nu,(v_\omega)_{\omega\in\Omega})$ with $\nu=\pi$, where $\pi$ is a
probability measure, and we suppose that this game is balanced, i.e., it has a
nonempty ex-ante core. According to Theorem~\ref{thprob} above, we know that a
grand contract $\textbf{c}^N$ is in the ex-ante core iff (\ref{eq:exac}) is satisfied.

The aim of this section is to study the geometrical structure of the ex-ante
core. We introduce
some notation: for a (classical) TU-game $v$ on $N$, its core is denoted by $C(v)$ and for a Bel coalitional game, with  $\nu_j=\pi$ for all
$j\in N$, we denote the ex-ante core as $C((v_\omega)_{\omega\in \Omega},\pi)$.

By definition of Bel coalitional games, $\pi$ has full support $\Omega$, i.e.,
there is no $\omega\in \Omega$ such that $\pi(\omega)=0$. 

In this section, in order to simplify the notation and to make it closer to the
case of classical TU games, instead of denoting a grand contract by $\vecc^N$ as
a vector-valued function on $\Omega$, we denote it by $\vecx$, a vector in
$\RR^{N\times \Omega}$, with coordinates $\textbf{x}^\omega_i$,
$i\in N$, $\omega\in \Omega$.

Recall that any grand contract $\vecx$ satisfies feasibility:
\[
\sum_{i\in N}\textbf{x}^\omega_i\leq v_\omega(N), \quad (\omega\in\Omega).
\]
If $\textbf{x}\in C((v_\omega)_{\omega\in\Omega},\pi)$, it satisfies Equation (\ref{eq:exac}):
\[
\EE(\sum_{i\in N}\textbf{x}^\omega_i)\geq \EE(v_\omega(N)).
\]
Since $\pi$ has full support, this forces
\[
\sum_{i\in N}\textbf{x}^\omega_i= v_\omega(N), \quad (\omega\in\Omega).
\]
Therefore, letting $\Omega=\{\omega_1,\ldots,\omega_d\}$ and denoting for
simplicity $\pi(\omega_j)=:\pi_j$, $\textbf{x}^{\omega_j}_i=:\textbf{x}^j_i$, and
$v_{\omega_j}=v_j$, the ex-ante core is
the set of vectors $\textbf{x}\in \RR^{N\times\Omega}$ satisfying the following set of
inequalities:
\begin{align}
  \sum_{j=1}^d\pi_j\sum_{i\in S}\textbf{x}^j_i &\geq \sum_{j=1}^d\pi_jv_j(S), \quad
  S\subset N\label{eq:c1}\\
  \sum_{i\in N}\textbf{x}^j_i & = v_j(N), \quad j\in \{1,\ldots,d\}\label{eq:c2}
\end{align}
We may use the shorthand $\sum_{i\in S}\textbf{x}^j_i=:\textbf{x}^j(S)$ to simplify the
notation. Also, for $\textbf{x}\in\RR^{N\times\Omega}$, $\textbf{x}^j$ is the vector in $\RR^N$
with coordinates $\textbf{x}^j_i$, $i\in N$. 
We see that the ex-ante core is a convex polyhedron, whose structure
remains to be elucidated.

Some simple results can be established. Let us introduce the expected TU-game
$V$ on $N$ defined by
\[
V(S) =\sum_{\omega\in\Omega}\pi(\omega)v_{\omega}(S), \quad (S\subseteq N).
\]
\begin{proposition}\label{prop:sr}
  The following holds.
  \begin{enumerate}
  \item Let us consider $\textbf{x}^j\in C(v_j)$ for $j=\{1,\ldots,d\}$. Then
    $\sum_{j=1}^d\pi_j\vecx^j$ is an element of $C(V)$.
  \item If $\textbf{x}\in\RR^{N\times\Omega}$ is an ex-ante core element then
    $\textbf{y}=\sum_{j=1}^d\pi_j\textbf{x}^j\in\mathbb{R}^N$ belongs to $C(V)$.
    \item If $\textbf{y}\in C(V)$, then any solution
      $(\textbf{x}^1,\ldots,\textbf{x}^d)\in\RR^{N\times\Omega}$ to the following system
      \begin{align}
        \sum_{j=1}^d\pi_j\textbf{x}^j &= \textbf{y}\label{eq:c3}\\
        \textbf{x}^j(N) & = v_j(N), \quad j=1,\ldots,d\label{eq:c4}
      \end{align}
       is an element of the ex-ante core.
  \end{enumerate}
\end{proposition}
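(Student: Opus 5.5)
All three parts follow by direct verification from Definition~\ref{def:core}, the description (\ref{eq:c1})--(\ref{eq:c2}) of the ex-ante core, and linearity of the expectation. The common tool is the identity
\[
\Bigl(\sum_{j=1}^d\pi_j\vecx^j\Bigr)(S)=\sum_{j=1}^d\pi_j\,\vecx^j(S),\qquad S\subseteq N,
\]
which holds because $\vecy(S)=\sum_{i\in S}\vecy_i$ is linear in $\vecy$. I will verify each part in turn.

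For part 1, take $\vecx^j\in C(v_j)$ and set $\vecy=\sum_j\pi_j\vecx^j$. For every $S\subseteq N$, the identity gives $\vecy(S)=\sum_j\pi_j\vecx^j(S)$. Since $\pi_j\ge 0$ and $\vecx^j(S)\ge v_j(S)$, this is at least $\sum_j\pi_jv_j(S)=V(S)$. For $S=N$, we have $\vecx^j(N)=v_j(N)$, so equality holds and $\vecy(N)=V(N)$. Hence $\vecy\in C(V)$.

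For part 2, let $\vecx$ be in the ex-ante core and set $\vecy=\sum_j\pi_j\vecx^j$. By Theorem~\ref{thprob}, or equivalently (\ref{eq:c1}), we get $\vecy(S)=\sum_j\pi_j\vecx^j(S)\ge V(S)$ for every $S\subset N$. Before (\ref{eq:c1})--(\ref{eq:c2}) it was shown that full support of $\pi$ forces $\vecx^j(N)=v_j(N)$ for all $j$. Applying the identity then gives $\vecy(N)=V(N)$, so $\vecy\in C(V)$.

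For part 3, let $\vecy\in C(V)$ and let $(\vecx^1,\ldots,\vecx^d)$ solve (\ref{eq:c3})--(\ref{eq:c4}). Condition (\ref{eq:c4}) is exactly (\ref{eq:c2}), and it also gives feasibility of the grand contract. For $S\subset N$, combining (\ref{eq:c3}) with the identity yields $\sum_j\pi_j\vecx^j(S)=\vecy(S)\ge V(S)=\sum_j\pi_jv_j(S)$, which is (\ref{eq:c1}). Theorem~\ref{thprob} then shows that $\vecx$ lies in the ex-ante core.

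The only step needing care is the efficiency equality in part 2. This relies on the earlier observation that full support of $\pi$ turns the expected efficiency inequality into statewise equalities; I will cite that observation rather than reprove it. Otherwise each part is a one-line verification.
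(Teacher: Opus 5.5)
Your proposal is correct and follows the paper's proof essentially step for step. Part 1 averages the core conditions of each $v_j$ with weights $\pi_j$. Part 2 averages (\ref{eq:c1})--(\ref{eq:c2}). Part 3 gets (\ref{eq:c2}) from (\ref{eq:c4}) and obtains (\ref{eq:c1}) by rewriting $\vecy(S)\ge V(S)$ through (\ref{eq:c3}); your remarks on feasibility and on the full-support step only spell out what the paper uses implicitly.
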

\begin{proof}
  (i) As $\textbf{x}^j$ satisfies $\textbf{x}^j(N)=v_j(N)$ and $\textbf{x}^j(S)\geq v_j(S)$ for all
  $S\subseteq N$ and $j=1,\ldots,d$, the result follows.

  (ii) Take $\textbf{x}\in C((v_\omega)_{\omega\in \Omega},\pi)$ and put $\textbf{y}:=\sum_{j=1}^d\pi_j\textbf{x}^j$. Since $\textbf{x}$ satisfies
  (\ref{eq:c2}), this implies $\textbf{y}(N)=\sum_{j=1}^d\pi_j\textbf{x}^j(N)=\sum_{j=1}^d\pi_jv_j(N)=V(N)$. Since $\textbf{x}$
  satisfies (\ref{eq:c1}), it follows
  $\textbf{y}(S)=\sum_{j=1}^d\pi_j\textbf{x}^j(S)\geq\sum_{j=1}^d\pi_jv_j(S)=V(S)$. Therefore
  $\textbf{y}\in C(V)$.

  (iii) Take $\textbf{y}\in C(V)$ and consider any solution
  $\textbf{x}=(\textbf{x}^1,\ldots,\textbf{x}^d)\in\RR^{N\times\Omega}$ to (\ref{eq:c3}) and
  (\ref{eq:c4}). Then $\textbf{x}$ satisfies (\ref{eq:c2}). Also,
  $V(S)=\sum_{j=1}^d\pi_jv_j(S)\leq \textbf{y}(S)=\sum_{j=1}^d\pi_j\textbf{x}^j(S)$ by
  (\ref{eq:c3}), hence $\textbf{x}$ satisfies also (\ref{eq:c1}) and is therefore an element of the ex-ante core. 
  \end{proof}

Let us note that (ii) and (iii) characterize the ex-ante core, and may indicate a practical way
to find all elements in the ex-ante core. From (iii), we see that the ex-ante
core is unbounded, since if $\textbf{x}^j_i$ tends to $+\infty$, it suffices that another
coordinate $\textbf{x}^j_{i'}$ tends to $-\infty$.

\begin{proposition}\label{prop:ir}
Consider a balanced Bel coalitional game $(N,\Omega,m_\nu,(v_\omega)_{\omega\in\Omega})$
with $\nu=\pi$. Then, its ex-ante core $C((v_\omega)_{\omega\in\Omega},\pi)$
  \begin{enumerate}
    \item 
 contains a lineality space of dimension
$(n-1)(d-1)$. A basis for this space is
$(\vecw^1_2,\ldots,\vecw^1_n,\vecw^2_2,\ldots,\vecw^2_n,\ldots, \vecw^{d-1}_2,\ldots,\vecw^{d-1}_n)$,
given by
\begin{align*}
  \vecw^1_2 &=
  \Big(\underbrace{-1,1,0,\ldots,0}_{j=1},0\ldots,0,\underbrace{\frac{\pi_1}{\pi_d},-\frac{\pi_1}{\pi_d},0,\ldots,0}_{j=d}\Big)\\
  \vecw^1_n &=
  \Big(\underbrace{-1,0,\ldots,0,1}_{j=1},0\ldots,0,\underbrace{\frac{\pi_1}{\pi_d},0,\ldots,0,-\frac{\pi_1}{\pi_d}}_{j=d}\Big)\\
    \vecw^2_2 &=
  \Big(0,\ldots,0,\underbrace{-1,1,0,\ldots,0}_{j=2},0\ldots,0,\underbrace{\frac{\pi_2}{\pi_d},-\frac{\pi_2}{\pi_d},0,\ldots,0}_{j=d}\Big)\\
  \vecw^2_n &=
  \Big(0,\ldots,0,\underbrace{-1,0,\ldots,0,1}_{j=2},0\ldots,0,\underbrace{\frac{\pi_2}{\pi_d},0,\ldots,0,-\frac{\pi_2}{\pi_d}}_{j=d}\Big)\\
  \vecw^{d-1}_2 &=
  \Big(0,\ldots,0,\underbrace{-1,1,0,\ldots,0}_{j=d-1},\underbrace{\frac{\pi_{d-1}}{\pi_d},-\frac{\pi_{d-1}}{\pi_d},0,\ldots,0}_{j=d}\Big)\\
  \vecw^{d-1}_n &=
  \Big(0,\ldots,0,\underbrace{-1,0,\ldots,0,1}_{j=d-1},\underbrace{\frac{\pi_{d-1}}{\pi_d},0,\ldots,0,-\frac{\pi_{d-1}}{\pi_d}}_{j=d}\Big)
\end{align*}
\item has no other extremal ray;
\item has no vertices.
  \end{enumerate}
  \end{proposition}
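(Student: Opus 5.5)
The plan is to work directly with the linear description (\ref{eq:c1})--(\ref{eq:c2}) of the ex-ante core. For a polyhedron $P=\{\vecx : A\vecx\geq b,\ B\vecx=c\}$, the lineality space is $\{\vecz : A\vecz=0,\ B\vecz=0\}$, and the recession cone is $\{\vecz : A\vecz\geq 0,\ B\vecz=0\}$. Here the constraint rows are of two kinds. The inequalities read $\sum_j\pi_j\vecz^j(S)$ for $S\subset N$; the equalities read $\vecz^j(N)$ for $j=1,\ldots,d$.

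For (i), I would show that the lineality space $L$ is
\[
L=\Big\{\vecz\in\RR^{N\times\Omega} : \sum_{j=1}^d\pi_j\vecz^j=\0,\ \vecz^j(N)=0,\ j=1,\ldots,d\Big\}.
\]
The inclusion "$\supseteq$" is immediate, since $\sum_j\pi_j\vecz^j(S)=0$ for every $S$. For "$\subseteq$", take all singletons $S=\{i\}$: this gives $\sum_j\pi_j\vecz^j_i=0$ for every $i$, so $\sum_j\pi_j\vecz^j=\0$.

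Next I count dimensions. The constraints consist of $n$ equations $\sum_j\pi_j\vecz^j=\0$ and $d$ equations $\vecz^j(N)=0$. Among these there is exactly one dependency: summing the first family over $i$ gives $\sum_j\pi_j\vecz^j(N)$, which is the $\pi$-weighted sum of the second family. I would verify that this is the only dependency, i.e.\ that the coefficient matrix has rank $n+d-1$; it has a Kronecker-like structure and this is easy to check. Hence
\[
\dim L=nd-(n+d-1)=(n-1)(d-1).
\]
Then I check that each listed $\vecw^j_k$ lies in $L$. The block $j$ sums to $0$, the block $d$ sums to $0$, and for coordinate $1$ we get $\pi_j(-1)+\pi_d(\pi_j/\pi_d)=0$; coordinate $k$ works in the same way. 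For linear independence, restrict to the first $d-1$ blocks: there the vectors $\vecw^j_2,\ldots,\vecw^j_n$ occupy disjoint blocks and, within each block, have distinct coordinates $k$ carrying the entry $1$. Since there are $(n-1)(d-1)$ of them, they form a basis.

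For (ii), I would compute the recession cone $R$. A vector $\vecz\in R$ satisfies $\sum_j\pi_j\vecz^j(S)\geq 0$ for all $S$, together with $\vecz^j(N)=0$ for all $j$. Put $\vecy=\sum_j\pi_j\vecz^j$; then $\vecy(N)=0$ and $\vecy(S)\geq 0$ for all $S$. Taking $S=N\setminus\{i\}$ gives $\vecy_i=\vecy(N)-\vecy(N\setminus\{i\})\leq 0$, while $S=\{i\}$ gives $\vecy_i\geq 0$. Hence $\vecy=\0$ and $R=L$. A cone equal to a linear subspace has no extremal rays beyond the lineality space itself, which gives (ii).

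For (iii), a polyhedron has vertices if and only if its lineality space is $\{0\}$. By (i) the lineality space has dimension $(n-1)(d-1)$, which is positive when $n\geq2$ and $d\geq2$, so the core has no vertices. (In the degenerate case $d=1$ we are back to the classical core, which is excluded implicitly.)

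The main obstacle is the rank computation in (i). I would handle it by the independence argument above: the $(n-1)(d-1)$ explicit vectors give a lower bound on $\dim L$. For the upper bound, note that $\vecz$ is determined by its first $d-1$ blocks with coordinates $2,\ldots,n$. Indeed, coordinate $1$ of each block $j$ is fixed by $\vecz^j(N)=0$, and block $d$ is fixed by $\sum_j\pi_j\vecz^j=\0$, using $\pi_d>0$ from full support.
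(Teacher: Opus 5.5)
Your proposal is correct and follows essentially the same route as the paper. The singleton constraints force $\sum_j\pi_j\vecz^j=\0$; solving for block $d$ and for the first coordinate of blocks $1,\ldots,d-1$ gives dimension $(n-1)(d-1)$; the recession cone collapses to the lineality space (you use $S=N\setminus\{i\}$ where the paper sums the singleton inequalities, an immaterial difference); and a nontrivial lineality space rules out vertices. Your explicit check that the listed $\vecw^j_k$ lie in $L$ and are independent, and your remark that (iii) needs $n,d\geq 2$, usefully fill in what the paper leaves implicit.
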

  \begin{proof}
    (i)
Recall that the lineality space of a polyhedron defined by $A\textbf{x}\geq b$ is given
by the set of solutions of $A\textbf{x}=0$. In our case, this amounts to solving the
system
\begin{align}
  \sum_{j=1}^d\pi_j\sum_{i\in S}\textbf{x}^j_i &=0, \quad
  S\subset N\label{eq:p1}\\
  \sum_{i\in N}\textbf{x}^j_i & = 0, \quad j\in \{1,\ldots,d\}.\label{eq:p2}
\end{align}
Considering the equations pertaining to $|S|=1$, we get:
\begin{equation}\label{eq:p3}
\textbf{x}^d_i = -\frac{1}{\pi_d}(\pi_1\textbf{x}^1_i+ \cdots + \pi_{d-1}\textbf{x}^{d-1}_i) \quad (i\in N).
\end{equation}
Observe that any equation pertaining to any $S$ with $|S|>1$ is redundant, since
it can be obtained as the sum of the equations $\sum_{j=1}^d\pi_j\textbf{x}^j_i=0$ for
all $i\in S$. It remains the equations $\sum_{i\in N}\textbf{x}^j_i=0$, $j=1,\ldots,d$,
from which we get
\[
\textbf{x}_1^j = -\textbf{x}_2^j- \cdots-\textbf{x}_n^j \quad (j=1,\ldots,d).
\]
Observe however that the last one ($j=d$) is redundant, since it can be obtained
from summing all equations of (\ref{eq:p3}). Therefore, we have eliminated
$n+d-1$ variables, which makes the dimension of the linear subspace equal to
$(n-1)(d-1)$. The general form of a vector $\vecw$ in the lineality space is:
\begin{multline*}
\vecw=
\Bigg(\underbrace{-\sum_{i=2}^n\textbf{x}_i^1,\textbf{x}_2^1,\ldots,\textbf{x}_n^1}_{j=1},\ldots,\underbrace{-\sum_{i=2}^n\textbf{x}_i^{d-1},\textbf{x}_2^{d-1},\ldots,\textbf{x}_n^{d-1}}_{j=d-1},\\ \underbrace{\frac{1}{\pi_d}\Big(\pi_1\sum_{i=2}^n\textbf{x}_i^1+\cdots+\pi_{d-1}\sum_{i=2}^n\textbf{x}_i^{d-1}\Big),
-\frac{1}{\pi_d}\sum_{i=1}^{d-1}\pi_i\textbf{x}^i_2,\ldots,-\frac{1}{\pi_d}\sum_{i=1}^{d-1}\pi_i\textbf{x}^i_n}_{j=d}\Bigg) 
\end{multline*}
with $\textbf{x}^1_2,\ldots,\textbf{x}^1_n,\textbf{x}^2_2,\ldots,\textbf{x}^2_n,\ldots,
\textbf{x}^{d-1}_2,\ldots,\textbf{x}^{d-1}_n\in \RR$.
The expression of the basis follows from this general form.

(ii) The conic part (recession cone) of the ex-ante core is given by the system:
\begin{align}
  \sum_{j=1}^d\pi_j\sum_{i\in S}\textbf{x}^j_i &\geq 0, \quad
  S\subset N\label{eq:c11}\\
  \sum_{i\in N}\textbf{x}^j_i & = 0, \quad j\in \{1,\ldots,d\}.\label{eq:c22}
\end{align}
Observe that by (\ref{eq:c22}) we have
\[
0=\sum_{j=1}^d\pi_j\sum_{i\in N}\textbf{x}^j_i =\sum_{i\in N}\sum_{j=1}^d\pi_j\textbf{x}^j_i.
\]
By (\ref{eq:c11}) for $|S|=1$, each term $\sum_{j=1}^d\pi_j\textbf{x}^j_i$ is nonnegative, therefore
each term must be zero:
\[
\sum_{j=1}^d\pi_j\textbf{x}^j_i=0\quad (i\in N)
\]
from which we deduce that equality holds in each equation of
(\ref{eq:c11}). Consequently, the recession cone amounts to be identical to the
lineality space.

(iii) By (i), the ex-ante core has no vertices since it contains a lineality space.
  \end{proof}

Recall that by the Weyl-Minkovski theorem any convex closed $n$-dimensional
polyhedron $P$ (whose representation by inequalities is $A\vecx \geq b$) can be
decomposed into its convex and its conic parts:
\[
P = \mathrm{conv}(\vecx^1 , \ldots , \vecx^q ) + \mathrm{cone}(\vecz^1 , \ldots, \vecz^r ).
\]
The above theorem has found the conic part (recession cone) of the ex-ante core,
it remains to find the convex part. When the recession cone is pointed, $\vecx^1
, \ldots, \vecx^q$ correspond to the vertices of $P$ , i.e., they satisfy a
subsystem of $m$ independent equalities in $A\vecx \geq b$, where $m$ is the
dimension of the polyhedron. Otherwise, there is no vertex in this sense, but
``pseudo-vertices'' which are solution of a subsystem of $m-l$ independent
equalities, where $l$ is the dimension of the lineality space. If $\vecx$ is a
pseudo-vertex, then $\vecx + \vecw$ is another one, where $\vecw$ belongs to the
lineality space.

Applying these facts to our case permits to get a complete description of the
ex-ante core.

\begin{proposition}\label{prop:mr}
Consider a balanced Bel coalitional game $(N,\Omega,m_\nu,(v_\omega)_{\omega\in\Omega})$
with $\nu=\pi$. The following holds.
\begin{enumerate}
\item $C((v_\omega)_{\omega\in\Omega},\pi)\neq\emptyset$ iff
  $C(V)\neq\emptyset$.
\item Supposing $C(V ) \neq \emptyset$, to any vertex $\vecy \in \RR^N$ of $C(V
  )$ corresponds a pseudo-vertex $\vecx \in \RR^{N \times\Omega}$ of
  $C((v_\omega)_{\omega\in\Omega},\pi)$ and conversely, where the correspondance
  is given by (\ref{eq:c3}) and (\ref{eq:c4}).
\end{enumerate}
\end{proposition}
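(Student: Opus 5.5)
Part (i) follows directly from Proposition~\ref{prop:sr}. If $\vecx\in C((v_\omega)_{\omega\in\Omega},\pi)$, then by (ii) of that proposition $\vecy=\sum_j\pi_j\vecx^j\in C(V)$, so $C(V)\neq\emptyset$. Conversely, take $\vecy\in C(V)$. By (iii) it suffices to exhibit one solution of (\ref{eq:c3})--(\ref{eq:c4}). We pick $\vecx^1,\ldots,\vecx^{d-1}$ with $\vecx^j(N)=v_j(N)$ arbitrary, for instance $\vecx^j_1=v_j(N)$ and $\vecx^j_i=0$ for $i\neq 1$. We then set $\vecx^d=\frac{1}{\pi_d}\bigl(\vecy-\sum_{j<d}\pi_j\vecx^j\bigr)$, which is legitimate because $\pi$ has full support. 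This gives $\vecx^d(N)=\frac{1}{\pi_d}\bigl(V(N)-\sum_{j<d}\pi_jv_j(N)\bigr)=v_d(N)$, so (\ref{eq:c4}) also holds for $j=d$.

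For part (ii) we use the linear map $L:\RR^{N\times\Omega}\to\RR^N$, $L(\vecx)=\sum_j\pi_j\vecx^j$, restricted to the affine subspace $A=\{\vecx:\vecx^j(N)=v_j(N),\ j=1,\ldots,d\}$. By Proposition~\ref{prop:sr}(ii)--(iii), the ex-ante core equals $A\cap L^{-1}(C(V))$, and $L(A)$ is the hyperplane $\{\vecy:\vecy(N)=V(N)\}$ by the construction in part (i). The key observation is that $L$ restricted to $A$ is surjective onto this hyperplane, and its fibres are translates of $\ker L\cap\{\vecx^j(N)=0\ \forall j\}$. By the computation in Proposition~\ref{prop:ir}(i), equations (\ref{eq:p1})--(\ref{eq:p2}) reduce to exactly this intersection, so the fibres are translates of the lineality space $W$. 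Consequently the ex-ante core is the product-like set $L|_A^{-1}(C(V))$. Each inequality (\ref{eq:c1}) for $S$ is the pullback of the inequality $\vecy(S)\geq V(S)$ defining $C(V)$.

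The main step is matching pseudo-vertices with vertices. A pseudo-vertex $\vecx$ is a point where the active constraints among (\ref{eq:c1})--(\ref{eq:c2}) have rank $m-l$, where $m=\dim$ of the core and $l=(n-1)(d-1)=\dim W$. Equivalently, $\vecx+W$ is a minimal face of the core. Since the core is $L|_A^{-1}(C(V))$ and the fibres are exactly $W$, faces of the core correspond bijectively to faces of $C(V)$ via $F\mapsto L|_A^{-1}(F)$. Moreover, the dimensions satisfy $\dim L|_A^{-1}(F)=\dim F+l$, and the active sets coincide, because constraint (\ref{eq:c1}) for $S$ is active at $\vecx$ iff $\vecy(S)=V(S)$ at $\vecy=L(\vecx)$. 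Minimal faces therefore correspond. Since $C(V)$ is pointed (it lies in the hyperplane $\vecy(N)=V(N)$ and is bounded below coordinatewise), its minimal faces are its vertices. So for a vertex $\vecy$, the fibre $L|_A^{-1}(\vecy)$, i.e., the solution set of (\ref{eq:c3})--(\ref{eq:c4}), is a minimal face $\vecx+W$ and consists of pseudo-vertices. Conversely, $L$ maps a pseudo-vertex to a vertex.

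The main obstacle is the rank bookkeeping: showing that the rank of the active constraints at $\vecx$ equals the rank of the active constraints at $\vecy$ plus $d$ (the efficiency equations (\ref{eq:c2})). I would handle it by noting that the constraint row for $S$ in (\ref{eq:c1}) is $\1_S$ composed with $L$. Row vectors $a\circ L$ together with the $d$ efficiency rows are linearly dependent only through the relation $\1_N\circ L=\sum_j\pi_j(\text{efficiency row } j)$. Hence the rank is $\mathrm{rank}\{\1_S: S \text{ active}\}\cup\{\1_N\}$ plus $d-1$. Comparing this with $\dim C(V)$ and $m=\dim C(V)+l$ (the latter from the fibre structure) gives the required equivalence.
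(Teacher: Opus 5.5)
Your proof is correct, and for part (ii) it takes a different route from the paper. The paper only counts tight constraints. For a solution $\vecx$ of (\ref{eq:c3})--(\ref{eq:c4}) over a vertex $\vecy$, it combines the $d$ equalities (\ref{eq:c2}) with the $n-1$ equalities of (\ref{eq:c1}) inherited from tight constraints $\vecy(S)=V(S)$, $S\in\cC$. It asserts that these $n+d-1=nd-(n-1)(d-1)$ equalities are independent, and runs the count backwards for the converse.

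You instead write the ex-ante core as $L|_A^{-1}(C(V))$, where $L|_A$ is an affine surjection onto the hyperplane $\vecy(N)=V(N)$ whose fibres are translates of the lineality space $W$. Then $F\mapsto L|_A^{-1}(F)$ is an inclusion-preserving bijection of faces that raises dimension by $l$. Minimal faces therefore match, and pointedness of $C(V)$ identifies them with vertices.

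What your route buys:
\begin{itemize}
\item It proves more than the statement: the whole face lattice corresponds, not only the minimal faces.
\item It sidesteps the joint-independence check that the paper leaves implicit. One must choose $\cC$ with $\1_N\notin\mathrm{span}\{\1_S:S\in\cC\}$; the condition $N\notin\cC$ alone does not ensure this.
\item Your part (i) supplies an explicit solution of (\ref{eq:c3})--(\ref{eq:c4}), whose existence the paper uses tacitly.
\end{itemize}
The paper's version is shorter and stays in the tight-inequality language it used to introduce pseudo-vertices.

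Your rank paragraph, however, aims at the wrong target. The criterion for a pseudo-vertex is that the active rows have rank $nd-l=n+d-1$, the rank of the whole system, which is what the paper's proof uses. It is not $m-l$ with $m$ the dimension of the core. Since $m=\dim C(V)+l$, your target equals $\dim C(V)\le n-1$. In the paper's example ($n=3$, $d=2$) this is $2$, which is already the rank of the two efficiency rows, so every relative-interior point of the ex-ante core would qualify. Hence your ``equivalently, $\vecx+W$ is a minimal face'' is not an equivalence as written. The minimal-face formulation is the correct one, and it is the one your main argument actually uses.

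Your formula $\mathrm{rank}\bigl(\{\1_S: S\text{ active}\}\cup\{\1_N\}\bigr)+d-1$ is right. Your earlier ``plus $d$'' should read ``plus $d-1$'' once the always-active row $\1_N$ is counted at $\vecy$. With the correct target, the formula gives directly that the rank at $\vecx$ is $n+d-1$ iff $\mathrm{rank}\bigl(\{\1_S\}\cup\{\1_N\}\bigr)=n$ iff $L(\vecx)$ is a vertex of $C(V)$. That is the paper's count made rigorous. Your face argument already settles (ii), so this paragraph is dispensable, but as written it would not deliver the equivalence.
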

\begin{proof}
1. Suppose the ex-ante core is nonempty. Then there exists
$\vecx\in\RR^{N\times\Omega}$, $\vecx = (\vecx^1 ,
\ldots, \vecx^d )$, satisfying the system (\ref{eq:c1}),
(\ref{eq:c2}). Letting $\vecy :=\sum_{j=1}^d\pi_j\vecx^j$, by Proposition~\ref{prop:sr} (2), this
vector belongs to $C(V)$. Conversely, supposing $C(V)$ is nonempty, we can pick a
core element $\vecy$. Then any solution of (\ref{eq:c3}), (\ref{eq:c4}) is an
element of the ex-ante core by Proposition~\ref{prop:sr} (3).

2. Take a vertex $\vecy$ of $C(V)$, and consider a solution $\vecx = (\vecx^1 ,
\ldots, \vecx^d )\in \RR^{N\times\Omega}$ of (\ref{eq:c3}), (\ref{eq:c4}). We
know already by Proposition~\ref{prop:sr} (3) that $\vecx$ is an element of the
ex-ante core. From Proposition~\ref{prop:ir} (1), we have to prove that this ex-ante
core element must satisfy a subsystem of $nd-(n-1)(d-1) = n + d - 1$ independent
equations. Observe that (\ref{eq:c2}) provides $d$ independent equalities for
$\vecx$. Since $\vecy$ is a vertex of the core, it satisfies a subsystem of at
least $n - 1$ independent equations $\vecy(S) = V(S)$, $S \in \cC$, with
$N\not\in \cC$, and $\cC\subseteq 2^N$. Therefore, by (\ref{eq:c3}), we have
\[
\sum_{i\in S}\sum_{j=1}^d\pi_j\vecx_i^j = \sum_{j=1}^d\pi_jv_j(S),\ \forall S\in\cC.
\]
Hence, $n-1$ independent equations in (\ref{eq:c1}) are satisfied, and as a conclusion, $\vecx$ is a pseudo-vertex of the ex-ante core.

Conversely, suppose that $\vecx = (\vecx^1 , \ldots, \vecx^d )$ is a pseudo-vertex of the
ex-ante core. Consider $\vecy := \sum_{j=1}^d \pi_j \vecx^j$. By
Proposition~\ref{prop:sr} (2), $\vecy$ is an element of $C(V)$. As $\vecx$ is a
pseudo-vertex, it satisfies a subsystem of $n - 1$ independent equalities in the
system (\ref{eq:c1}), which translates into a subsystem of $n - 1$ independent
equalities $\vecy(S) = V (S)$. Therefore, $\vecy$ is a vertex of $C(V)$.
\end{proof}
  
\begin{example}
  Take $n=3$, $d=2$ and the uniform distribution for $\pi$. Consider $v_\omega$
 a symmetric game (i.e., $v_\omega(S)$ depends only on the cardinality of $S$) defined by
 \begin{center}
   \begin{tabular}{|c|c|c|c|}\hline
     & $|S|=1$ & $|S|=2$ & $N$\\ \hline
     $v_1$ & 2 & 6 & 10 \\
     $v_2$ & 0 & 4 & 8\\
     $V$ & 1 & 5 & 9\\ \hline
    \end{tabular}
 \end{center}
 where $V=\frac{1}{2}(v_1+v_2)$. Observe that all three games are convex. The
 vertices of $C(V)$ are the 3 marginal vectors
 \[
\textbf{y}^1=(1 \ 4 \ 4), \ \textbf{y}^2=(4\ 1\ 4), \ \textbf{y}^3=(4\ 4\ 1).
\]
The ex-ante core is described by the following system:
\begin{align*}
\textbf{x}^1_1+\textbf{x}^2_1 & \geq 2\\
\textbf{x}^1_2+\textbf{x}^2_2 & \geq 2\\
\textbf{x}^1_3+\textbf{x}^2_3 & \geq 2\\
\textbf{x}^1_1+\textbf{x}^1_2+\textbf{x}^2_1+\textbf{x}^2_2 &\geq 10\\
\textbf{x}^1_1+\textbf{x}^1_3+\textbf{x}^2_1+\textbf{x}^2_3 &\geq 10\\
\textbf{x}^1_2+\textbf{x}^1_3+\textbf{x}^2_2+\textbf{x}^2_3 &\geq 10\\
\textbf{x}^1_1+\textbf{x}^1_2+\textbf{x}^1_3 & = 10\\
\textbf{x}^2_1+\textbf{x}^2_2+\textbf{x}^2_3 & = 8\\
\end{align*}
Let  us find the conic part, and first check if the cone is pointed by finding
its lineality space. The lineality space is the set of solutions of the system:
\begin{align*}
\textbf{x}^1_1+\textbf{x}^2_1 & = 0\\
\textbf{x}^1_2+\textbf{x}^2_2 & = 0\\
\textbf{x}^1_3+\textbf{x}^2_3 & = 0\\
\textbf{x}^1_1+\textbf{x}^1_2+\textbf{x}^2_1+\textbf{x}^2_2 &= 0\\
\textbf{x}^1_1+\textbf{x}^1_3+\textbf{x}^2_1+\textbf{x}^2_3 &= 0\\
\textbf{x}^1_2+\textbf{x}^1_3+\textbf{x}^2_2+\textbf{x}^2_3 &= 0\\
\textbf{x}^1_1+\textbf{x}^1_2+\textbf{x}^1_3 & = 0\\
\textbf{x}^2_1+\textbf{x}^2_2+\textbf{x}^2_3 & = 0\\
\end{align*}
We find that the set of solutions is a linear subspace of the form
$$\{(-\alpha-\beta,\alpha,\beta,\alpha+\beta,-\alpha,-\beta),\;\alpha,\beta\in
\RR\},$$
hence generated by the 2 vectors $(-1,1, 0,1,-1,0)$ and
$(-1,0,1,1,0,-1)$. This is in accordance with the results given in
Proposition~\ref{prop:ir}.

Let us find the pseudo-vertices, using Proposition~\ref{prop:mr}. Each
pseudo-vertex corresponds to one of the three vertices of $C(V)$ given
above. Taking $\vecy^1$, the corresponding pseudo-vertex $\vecx=(\vecx^1,\vecx^2)$ is a solution
of the system:
\begin{align*}
  \vecx^1_1+\vecx^2_1 & = 2\\
  \vecx^1_2+\vecx^2_2 & = 8\\
  \vecx^1_3+\vecx^2_3 & = 8\\
  \vecx^1_1+\vecx^1_2+\vecx^1_3 & = 10\\
  \vecx^2_1+\vecx^2_2+\vecx^2_3 & = 8
\end{align*}
A solution of this system has the form
\[
(\vecx^1_1,\vecx^1_2,\vecx^1_3,\vecx^2_1,\vecx^2_2,\vecx^2_3)=(\alpha,\beta,10-\alpha-\beta,2-\alpha,8-\beta,\alpha+\beta-2),
\]
for any $\alpha,\beta\in\RR$. Proceeding similarly with $\vecy^1,\vecy^2$ we
find respectively
\[
(\alpha,\beta,10-\alpha-\beta,8-\alpha,2-\beta,\alpha+\beta-2), \ (\alpha,\beta,10-\alpha-\beta,8-\alpha,8-\beta,\alpha+\beta-8).
\]
A verification using the PORTA (POlyhedron Representation Transformation
Algorithm) software\footnote{\tt https://porta.zib.de} yields the above vectors
for the lineality space and the following pseudo-vertices:
\[
(2,4,4,0,4,4),\ (2,4,4,6,-2,4), \ (2,4,4,6,4,-2).
\]
This corresponds to the three pseudo-vertices above with $\alpha=2$ and
$\beta=4$. 

In summary, the ex-ante core is the convex hull of 3 affine subspaces, parallel
to the subspace generated by $(-1,1, 0,1,-1,0)$ and
$(-1,0,1,1,0,-1)$, each containing one pseudo-vertex.
\hfill$\diamond$
\end{example}
 
\section{The ex-ante bargaining set and related solution concepts}\label{sec:barg}

In Bel coalitional games, following \cite{IS}, we assume that agents enter into
contracts, that are exogenously given, about how to divide the values of the
coalitions. In order to take into account negotiations between coalitions,
assuming that agents are allowed to propose contracts and to
  bargain, we propose the notion of ex-ante bargaining set, in the spirit of the
  classical definitions of the literature on coalitional games.

There are several definitions of the bargaining set. Among the
    most common ones we find the definition of Davis and Maschler \cite{dama67},
    and the one of Mas-Colell \cite{MC1989}. In the former definition, given a
    payment vector $x$, a player $k$ addresses an objection to another player
    $l$ by exhibiting a payment $y$ for a coalition $C$ containing $k$ but not
    $l$, which is affordable and strictly better than $x$ for all members of
    $C$. The objection is justified if $l$ cannot build an objection against
    $k$, and the bargaining set is the set of imputations which have no
    justified objection.  In the definition of Mas-Colell (modified later by
    Shimomura \cite{shi97}), the difference is that, instead of one player
    having objection against another player, we have that one coalition has an
    objection against another one. It appears that the definition of an
    objection at $x$ by a coalition $S$ is close to our notion of blocking,
    where a coalition $S$ blocks some grand contract by an $S$-contract. For
    this reason we adopt the definition of Mas-Colell. It is also to be noted
    that Dutta et al. \cite{dutta1989} have further refined the bargaining set
    of Mas-Colell, defining it in the general framework of NTU-games, and
    proposing the notion of {\it consistent bargaining set}. In their
    definition, an objection is valid (or justified) if no {\it justified}
    counterobjection exists (i.e., which has no justified
    counter-counterobjection, and so on, till exhaustion).

The main objective of this section is to introduce the
    ex-ante bargaining set, as well as related solution concepts, namely the
    ex-ante (pre)nucleolus and the ex-ante kernel. We study their
    interrelationships and obtain that thay are nonempty.  

\subsection{The ex-ante bargaining set}

\begin{definition}
Given a Bel coalitional game $(N,\Omega,(m_{\nu_j})_{j\in
  N},(v_\omega)_{\omega\in\Omega})$, a grand contract $\textbf{c}^N$ and an
$S$-contract $\textbf{c}^S$ such that $\textbf{c}^S\succ_S\textbf{c}_S^N$n
(i.e., $S$ blocks $\vecc^N$),
a coalition $S'$ {\it ex-ante counterblocks} $\textbf{c}^S$ if there exists a feasible $S'$-contract $\textbf{c}^{S'}$ such that
\begin{itemize}
\item $\mathbb{C}_{\nu_j}(\textbf{c}^{S'}_j)\ge\mathbb{C}_{\nu_j}(\textbf{c}^S_j)$, for all $j\in S'\cap S$
\item $\mathbb{C}_{\nu_j}(\textbf{c}^{S'}_j)\ge\mathbb{C}_{\nu_j}(\textbf{c}_j^N)$, for all $j\in S'\setminus S$
\end{itemize}
and at least one of these inequalities is strict.
\end{definition}

Hence, we can say that a coalition $S$ {\it legitimate ex-ante blocks} a grand
contract $\textbf{c}^N$ if there does not exist any counterblocking coalition
$S'$ to the contract $\textbf{c}^{S}$. Now we can finally give the definition of
ex-ante bargaining set, using the notion of legitimate blocking.

\begin{definition}
The {\it ex-ante bargaining set} of a Bel coalitional game is the set of all feasible grand contracts $\textbf{c}^N$ for which no coalition $S\subseteq N$ legitimate ex-ante blocks $\textbf{c}^N$.
\end{definition}

From the definitions it is clear that, in general, the ex-ante core is included
in the ex-ante bargaining set, as for classical coalitional games.

\subsection{The ex-ante nucleolus and the ex-ante kernel}
For simplicity let us assume in this section that agents have same knowledge
given by $\nu$. We introduce the notions of prenucleolus, nucleolus \cite{Schmeidler1969}
and kernel \cite{DM1963} in our
framework, and study inclusion properties among these notions.

  First of all, we define the {\it ex-ante excess} of some coalition $S$
to a grand contract $\textbf{c}^N$ as
$$e(S,\textbf{c}^N)=\mathbb{C}_{\nu}(v_\omega(S))-\sum_{i\in S}\mathbb{C}_{\nu}(\textbf{c}^N_i).$$
The ex-ante excess measures how dissatisfied the members of the coalition $S$ are with the grand contract $\textbf{c}^N$, before the ambiguity about the state of the world is resolved.
Then we introduce the vector of decreasing excesses
$$\theta(\textbf{c}^N)=(e(S_1,\textbf{c}^N),\dots,e(S_{2^N},\textbf{c}^N))$$
where $S_i$, $i=1,\dots,2^N$, are all the coalitions indexed such that
$$e(S_1,\textbf{c}^N)\ge\cdots\ge e(S_{2^N},\textbf{c}^N).$$ To minimize the
excess we compare any two vectors $\theta(\textbf{c}^N)$ and
$\theta({\textbf{c}'}^N)$ by the lexicographic order, denoted by
$\succsim_L$. Given two vectors $\theta(\textbf{c}^N)$ and
$\theta({\textbf{c}'}^N)$, we write
$\theta(\textbf{c}^N)\succsim_L\theta({\textbf{c}'}^N)$ if either
$\theta(\textbf{c}^N)=\theta({\textbf{c}'}^N)$ or there exists $k$,
$1\le k\le N$, such that $\theta_k(\textbf{c}^N)>\theta_k({\textbf{c}'}^N)$ and
$\theta_i(\textbf{c}^N)=\theta_i({\textbf{c}'}^N)$, for all $i$ such that $1\le
i<k$. In particular we have that
$\theta(\textbf{c}^N)\succ_L\theta({\textbf{c}'}^N)$ if
$\theta(\textbf{c}^N)\succsim_L \theta({\textbf{c}'}^N)$ and
$\theta({\textbf{c}'}^N)\not\succsim_L\theta({\textbf{c}}^N)$.

We give first the definition of prenucleolus and nucleolus. 
\begin{definition}
Let $X\subset \RR^{N\times\Omega}$. The {\it ex-ante nucleolus w.r.t. $X$} of a
Bel coalitional game is the set of all feasible grand contracts $\textbf{c}^N$
such that $\theta(\textbf{c}^N)\precsim_L\theta({\textbf{c}'}^N)$, for all
$\vecc'^N\in X$.
\end{definition}

The prenucleolus is defined as the nucleolus w.r.t. the set of all efficient grand
contracts. This is similar to the case of classical cooperative games.
\begin{definition}
The {\it ex-ante prenucleolus} of a Bel coalitional game is the set of all
efficient grand contracts $\textbf{c}^N$ such that
$\theta(\textbf{c}^N)\precsim_L\theta({\textbf{c}'}^N)$, for all efficient ${\textbf{c}'}^N$.
\end{definition}

To prove the nonemptiness of the prenucleolus, we use the expected game $V$
defined by
\[
V(S) = \CC_\nu(v_\omega(S)),\ S\subseteq N,
\]
which is a classical cooperative game. Given a payoff vector $\vecx\in\RR^N$,
recall that the excess of $\vecx$ at $S$ w.r.t. $V$ is defined by
$e(S,\vecx)=V(S)-\vecx(S)$, with $\vecx(S):=\sum_{i\in S}\vecx_i$. 

\begin{theorem}
For any Bel coalitional game, the prenucleolus is a nonempty set.
\end{theorem}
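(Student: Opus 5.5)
The plan is to reduce the problem to the classical prenucleolus of the expected game $V(S)=\CC_\nu(v_\omega(S))$, whose existence (and uniqueness) is a standard result of Schmeidler \cite{Schmeidler1969}. The key remark is that the ex-ante excess $e(S,\vecc^N)=V(S)-\sum_{i\in S}\CC_\nu(\vecc^N_i)$ depends on the grand contract only through the vector $\vecy(\vecc^N)\in\RR^N$ with coordinates $\vecy_i=\CC_\nu(\vecc^N_i)$. In fact $e(S,\vecc^N)$ is exactly the classical excess $V(S)-\vecy(S)$ of $\vecy$ w.r.t. $V$. Hence $\theta(\vecc^N)$ is the classical vector of decreasing excesses of $\vecy(\vecc^N)$ in the game $V$. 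It therefore suffices to understand the image $Y=\{\vecy(\vecc^N):\vecc^N \text{ efficient}\}$ and to minimize lexicographically over $Y$.

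First I determine two properties of $Y$. Upper bound: if $\vecc^N$ is efficient, superadditivity of the Choquet integral w.r.t. the belief function $\nu$ gives
\[
\vecy(N)=\sum_{i\in N}\CC_\nu(\vecc^N_i)\le \CC_\nu\Big(\sum_{i\in N}\vecc^N_i\Big)=\CC_\nu(v_\omega(N))=V(N),
\]
so $Y\subseteq\{\vecy:\vecy(N)\le V(N)\}$. Reverse inclusion on the hyperplane: given $\vecy$ with $\vecy(N)=V(N)$, define the contract
\[
\vecc^N_i(\omega)=\vecy_i \ (i\ne 1), \qquad \vecc^N_1(\omega)=v_\omega(N)-\sum_{i\ne 1}\vecy_i .
\]
This contract is efficient. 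Since the Choquet integral is translation invariant for constants (immediate from its definition with $\nu(\Omega)=1$), we get $\CC_\nu(\vecc^N_1)=V(N)-\sum_{i\ne1}\vecy_i=\vecy_1$. Hence the whole efficient hyperplane $H=\{\vecy:\vecy(N)=V(N)\}$ is contained in $Y$.

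Next let $\vecz\in H$ be the classical prenucleolus of $V$, and let $\vecc^{*N}$ be the efficient contract constructed above with $\vecy(\vecc^{*N})=\vecz$. I claim $\vecc^{*N}$ belongs to the ex-ante prenucleolus. Take any efficient $\vecc'^N$ and put $\vecy'=\vecy(\vecc'^N)$.
\begin{itemize}
\item If $\vecy'(N)=V(N)$, then $\vecy'\in H$, and $\theta(\vecc^{*N})\precsim_L\theta(\vecc'^N)$ is the defining property of $\vecz$.
\item If $\vecy'(N)<V(N)$, set $\vecy''=\vecy'+\delta\mathbf{e}_1$ with $\delta=V(N)-\vecy'(N)>0$. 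Then $\vecy''\in H$ and $\vecy''\ge\vecy'$ coordinatewise, so every excess $V(S)-\vecy''(S)$ is at most $V(S)-\vecy'(S)$.
\end{itemize}
In the second case I use the standard fact that a coordinatewise smaller vector has a coordinatewise smaller decreasingly-ordered rearrangement, hence is lexicographically smaller or equal. This gives $\theta(\vecy'')\precsim_L\theta(\vecy')$. Combined with $\theta(\vecz)\precsim_L\theta(\vecy'')$, transitivity yields the claim.

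The only delicate step is the first property: the efficient contracts can reach points strictly below the hyperplane $H$, because of strict superadditivity when the $\vecc^N_i$ are not comonotone. So one cannot simply identify $Y$ with $H$. The monotonicity argument in the second case handles exactly this, and I expect it to be the main point to write carefully, together with checking the rearrangement fact.
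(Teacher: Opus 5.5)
Your proof is correct. It uses the same reduction as the paper: identify $\theta(\vecc^N)$ with the classical excess vector of $\vecy_i=\CC_\nu(\vecc^N_i)$ in the expected game $V$, then appeal to the classical prenucleolus $\cN(V)$. It departs from the paper's proof at two points, and at both your version is the one that actually works.

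First, the paper's candidate is the certain contract $\tilde{\vecc}^N_i(\omega)=\cN_i(V)$. Its total in every state is $V(N)=\CC_\nu(v_\omega(N))$, so it is efficient only if $v_\omega(N)$ does not depend on $\omega$. In the paper's own numerical example ($v_1(N)=10$, $v_2(N)=8$, $V(N)=9$) it is not even feasible. Your contract puts the state-dependent remainder $v_\omega(N)-V(N)$ on player~1 and uses translation invariance to keep $\CC_\nu(\vecc^{*N}_i)=\cN_i(V)$, which fixes this.

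Second, the paper asserts that $\vecx_i=\CC_\nu(\vecc^N_i)$ is an efficient payoff vector for $V$. This holds for $\nu=\pi$ by linearity, but for a non-additive belief function superadditivity only gives $\vecx(N)\le V(N)$, possibly strictly. For instance, take the vacuous $\nu$, so $\CC_\nu(X)=\min_\omega X(\omega)$, with $v_\omega(N)=0$, $\vecc^N_1=(1,-1)$ and $\vecc^N_2=(-1,1)$; this gives $\vecx(N)=-2<0=V(N)$. In that case $\theta(\vecx)\prec_L\theta(\cN(V))$ would not contradict the definition of $\cN(V)$, which only ranges over the efficient hyperplane. Your lifting $\vecy'\mapsto\vecy''=\vecy'+\delta\mathbf{e}_1$, together with the fact that coordinatewise smaller excesses give a lexicographically smaller ordered excess vector, is exactly the missing step.

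So the paper's argument is shorter but, as written, covers only the case where $v_\omega(N)$ is state-independent and $\nu$ is additive. Yours establishes the theorem in the generality stated.
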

\begin{proof}
Consider a Bel coalitional game $(N,\Omega,m_\nu,(v_\omega)_{\omega\in\Omega})$,
and the expected game $(N,V)$ defined by
\[
V(S) = \CC_\nu(v_\omega(S)),\quad S\subseteq N.
\]
By standard results on cooperative games, the prenucleolus of $V$, denoted by
$\cN(V)$, exists and is a single point.

Let us show that the certain contract
$\tilde{\vecc}^N$  defined by
\[
\tilde{\vecc}^N_i(\omega) = \cN_i(V), \forall \omega\in\Omega,\forall i\in N
\]
belongs to the prenucleolus of the Bel coalitional game. Suppose not. Then there exists an efficient contract $\vecc^N$ such
that $\theta(\vecc^N)\prec_L\theta(\tilde{\vecc}^N)$.  Define
the following efficient payoff vector:
\[
\vecx_i=\CC_\nu(\vecc^N_i), \quad\forall
i\in N.
\]
Observe that for any $S\subseteq N$, the excesses of $\vecx$ w.r.t. $V$ satisfy
\[
e(S,\vecx)=V(S)-\vecx(S) = \CC_\nu(v_\omega(S))-\sum_{i\in
  S}\CC_\nu(\vecc^N_i)=e(S,\vecc^N),
\]
and similarly
\[
e(S,\tilde{\vecc}^N) = \CC_\nu(v_\omega(S))-\sum_{i\in
  S}\CC_\nu(\tilde{\vecc}^N_i)=V(S)-\sum_{i\in S}\cN_i(V)=e(S,\cN(V)).
\]
It follows that 
\[
\theta(\vecx)=\theta(\vecc^N)\prec_L\theta(\tilde{\vecc}^N)=\theta(\cN(V)),
\]
contradicting the fact that $\cN(V)$ is the prenucleolus of $V$.
\end{proof}

Observe however that, unlike the prenucleolus of classical cooperative games,
the ex-ante prenucleolus may not be a singleton and may not be a compact
set. Indeed, consider for example the case $\nu=\pi$, and take any efficient
grand contract $\vecc^N$ in the prenucleolus and define another efficient grand contract
$\vecc'^N$ , which differs from $\vecc^N$ inasmuch as
\[
\vecc'^N_i(\omega_1)=\vecc^N_i(\omega_1)+\frac{\delta}{\pi(\omega_1)}, \quad \vecc'^N_i(\omega_2)=\vecc^N_i(\omega_2)-\frac{\delta}{\pi(\omega_2)}
\]
for some $\delta\in\RR$, some $i\in N$ , and some $\omega_1,\omega_2\in\Omega$. Then $\vecc'^N$ belongs also to the
ex-ante prenucleolus, because $\EE_\pi (\vecc'^N_j)=\EE_\pi (\vecc^N_j )$ for all $j\in N$. As $\delta$ can be taken
arbitrarily large, the ex-ante prenucleolus is not a compact set.

\medskip

In order to define the ex-ante kernel we need the definition of ex-ante surplus. Let
$i,j$ be two distinct players in $N$. The \textit{ex-ante surplus} of $i$ against $j$ at $\textbf{c}^N$ is
$$s_{i,j}(\textbf{c}^N)=\max_{S\ni i,\,S\not\ni j}e(S,\textbf{c}^N).$$

\begin{definition}
The {\it ex-ante kernel} of a Bel coalitional game is the set of all feasible grand contracts $\textbf{c}^N$ such that for all $i,j\in N$
$$s_{i,j}(\textbf{c}^N)=s_{j,i}({\textbf{c}}^N).$$
\end{definition}

Proceeding like in \cite{Schmeidler1969}, we prove that the ex-ante prenucleolus is contained in the ex-ante kernel.

\begin{theorem}
The ex-ante prenucleolus of a Bel coalitional game is contained in the ex-ante kernel.
\end{theorem}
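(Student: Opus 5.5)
The plan is to adapt Schmeidler's classical argument directly to contracts. The key tool is that a \emph{constant} shift of a contract changes its Choquet expectation by exactly that constant. Let $\vecc^N$ belong to the ex-ante prenucleolus; it is efficient, hence feasible, so it only remains to check the kernel equalities. Suppose, for contradiction, that $s_{i,j}(\vecc^N)>s_{j,i}(\vecc^N)$ for some distinct $i,j\in N$. Put $\delta:=\frac{1}{2}\big(s_{i,j}(\vecc^N)-s_{j,i}(\vecc^N)\big)>0$. Define a new grand contract $\vecc'^N$ by $\vecc'^N_i(\omega)=\vecc^N_i(\omega)+\delta$ and $\vecc'^N_j(\omega)=\vecc^N_j(\omega)-\delta$ for every $\omega\in\Omega$, with all other coordinates unchanged. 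In every state the total is unchanged, so $\vecc'^N$ is efficient and is an admissible competitor in the definition of the prenucleolus.

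The first step is to compute the new excesses. The Choquet integral with respect to any capacity satisfies $\CC_\nu(X+a)=\CC_\nu(X)+a$ for a constant $a\in\RR$. This follows directly from the defining formula: shifting all values by $a$ does not change the ordering permutation $\sigma$, all consecutive differences stay the same, and only the last term changes, by $a\,\nu(\Omega)=a$. Hence $\CC_\nu(\vecc'^N_i)=\CC_\nu(\vecc^N_i)+\delta$, $\CC_\nu(\vecc'^N_j)=\CC_\nu(\vecc^N_j)-\delta$, and the other players' evaluations are unchanged. Consequently $e(S,\vecc'^N)=e(S,\vecc^N)-\delta$ when $i\in S\not\ni j$, $e(S,\vecc'^N)=e(S,\vecc^N)+\delta$ when $j\in S\not\ni i$, and $e(S,\vecc'^N)=e(S,\vecc^N)$ otherwise. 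Thus the excess vector transforms exactly as in the classical TU case, and the rest is the standard lexicographic argument.

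The second step is that argument, which I expect to be the only delicate point. Let $t:=\theta_1(\vecc^N)$ be the maximal excess, and let $\mathcal{B}$ be the collection of coalitions attaining the largest excess value that is strictly above $s_{j,i}(\vecc^N)+\delta$. I will show that the excess vector is unchanged on coalitions with excess above that level, and that coalitions at the top level containing $i$ but not $j$ strictly decrease while nothing rises to or above the top level. Concretely:
\begin{itemize}
\item coalitions containing $j$ but not $i$ have new excess at most $s_{j,i}(\vecc^N)+\delta=s_{i,j}(\vecc^N)-\delta$, which lies strictly below $s_{i,j}(\vecc^N)$;
\item coalitions containing $i$ but not $j$ decrease by $\delta$;
\item all other coalitions are unchanged.
\end{itemize}
Consider the level $s_{i,j}(\vecc^N)$. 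The number of coalitions with excess $\ge s_{i,j}(\vecc^N)$ strictly decreases, since at least one coalition achieving $s_{i,j}$ moves down by $\delta$ and none moves up to that level. Coalitions with excess strictly above $s_{i,j}(\vecc^N)$ are neither of the $i$-not-$j$ type nor of the $j$-not-$i$ type, because both types have excess $\le s_{i,j}(\vecc^N)$. These coalitions are therefore untouched.

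Hence the two sorted vectors agree on the initial block of entries strictly above $s_{i,j}(\vecc^N)$, and at the next position $\theta(\vecc'^N)$ is strictly smaller. This gives $\theta(\vecc'^N)\prec_L\theta(\vecc^N)$, contradicting the fact that $\vecc^N$ is in the ex-ante prenucleolus. Exchanging the roles of $i$ and $j$ rules out $s_{i,j}(\vecc^N)<s_{j,i}(\vecc^N)$ as well, so $s_{i,j}(\vecc^N)=s_{j,i}(\vecc^N)$ for all $i,j$, and $\vecc^N$ lies in the ex-ante kernel.
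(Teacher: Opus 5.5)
Your proof is correct and uses the same Schmeidler-type transfer argument as the paper, with translation invariance of the Choquet integral as the only new ingredient. Your choice $\delta=\frac{1}{2}\big(s_{i,j}(\vecc^N)-s_{j,i}(\vecc^N)\big)$ is in fact exactly what is needed. The paper takes $\delta=s_{i,j}(\vecc^N)-s_{j,i}(\vecc^N)$, and then a coalition attaining $s_{j,i}(\vecc^N)$ rises exactly to level $s_{i,j}(\vecc^N)$. So the strict decrease of $\theta_{k+1}$ claimed there fails: for $n=2$ the excesses of $\{i\}$ and $\{j\}$ just swap, and $\theta$ is unchanged.

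One small wording fix. The first position where the sorted vectors differ is not immediately after the entries strictly above $s_{i,j}(\vecc^N)$. It comes after all unchanged coalitions with excess $\ge s_{i,j}(\vecc^N)$, ties included (the paper's set $K$). Your count of coalitions at level $\ge s_{i,j}(\vecc^N)$ already gives this.
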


\begin{proof}
Let $\textbf{c}^N$ be in the ex-ante prenucleolus. By contradiction we assume that $\textbf{c}^N$ is not in the ex-ante kernel. Then, there exist $i,j\in N$ such that
$$s_{i,j}(\textbf{c}^N)>s_{j,i}({\textbf{c}}^N).$$
Now, let us define
$$\delta=s_{i,j}(\textbf{c}^N)-s_{j,i}({\textbf{c}}^N)>0$$
and consider the efficient grand contract $\tilde{\textbf{c}}^N$ such that
$$\tilde{\textbf{c}}^N_t(\omega)=\begin{cases}\textbf{c}^N_i(\omega)+\delta & \mbox{if}\,t=i\\
\textbf{c}^N_j(\omega)-\delta &\mbox{if}\,t=j\\
\textbf{c}^N_t(\omega) &\mbox{otherwise}\end{cases}$$
for all $\omega\in\Omega$.
We want to show that
$$\theta(\textbf{c}^N)\succ_L\theta({\tilde{\textbf{c}}}^N).$$

We define the set $K=\{S\subseteq N \,:\, e(S,\textbf{c}^N)\ge s_{i,j}(\textbf{c}^N)\,\mbox{and}\,i\in S\implies j\in S\}$ and we assume that $|K|=k\,\in\mathbb{N}$. By construction, we have that $\theta_k(\textbf{c}^N)\ge s_{i,j}(\textbf{c}^N)$, but $\theta_{k+1}(\textbf{c}^N)=s_{i,j}(\textbf{c}^N)$, since there is at least one coalition $R$ such that $i\in R$, $j\not\in R$ and $e(R,\textbf{c}^N)= s_{i,j}(\textbf{c}^N)$. For such coalition, using the translation invariance of Choquet integral, results that
\begin{eqnarray*}
e(R,\tilde{\textbf{c}}^N)&=&\mathbb{C}_{\nu}(v_\omega(R))-\sum_{t\in R}\mathbb{C}_{\nu}(\tilde{\textbf{c}}^N_t)\\
&=&\mathbb{C}_{\nu}(v_\omega(R))-[\sum_{t\in R\setminus\{i\}}\mathbb{C}_{\nu}(\textbf{c}^N_t)+\mathbb{C}_{\nu}(\textbf{c}^N_i+\delta)]\\
&=&\mathbb{C}_{\nu}(v_\omega(R))-[\sum_{t\in R\setminus\{i\}}\mathbb{C}_{\nu}(\textbf{c}^N_t)+\mathbb{C}_{\nu}(\textbf{c}^N_i)+\delta]\\
&=&\mathbb{C}_{\nu}(v_\omega(R))-\sum_{t\in R}\mathbb{C}_{\nu}(\textbf{c}^N_t)-\delta\\
&=&s_{i,j}(\textbf{c}^N)-\delta\\
&<&s_{i,j}(\textbf{c}^N)=e(R,\textbf{c}^N).
\end{eqnarray*}
Therefore, $\theta_{l}(\textbf{c}^N)=\theta_{l}(\tilde{\textbf{c}}^N)$, for $l=1,\dots, k$, and $\theta_{k+1}(\textbf{c}^N)>\theta_{k+1}(\tilde{\textbf{c}}^N)$, that is a contradiction.
\end{proof}

To conclude, we have the following statement.

\begin{theorem}
The ex-ante kernel of a Bel coalitional game is contained in the ex-ante bargaining set.
\end{theorem}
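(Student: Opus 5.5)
The plan is to reduce everything to the classical expected game $V(S)=\CC_\nu(v_\omega(S))$, as in the proof of nonemptiness of the prenucleolus, and then run the classical argument showing that the (pre)kernel lies in the Mas-Colell bargaining set. As in the definition of the kernel, we work with a common prior $\nu$.

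\emph{Reduction of a blocking to the expected game.} Let $\vecc^N$ be in the ex-ante kernel and set $\vecx_i=\CC_\nu(\vecc^N_i)$. By definition $e(S,\vecc^N)=V(S)-\vecx(S)$, so the kernel condition says exactly that $\vecx$ equalizes the surpluses of $V$. Suppose $S$ blocks $\vecc^N$ through an $S$-contract $\vecc^S$, and put $\mathbf{a}_i=\CC_\nu(\vecc^S_i)>\vecx_i$ for $i\in S$. Feasibility of $\vecc^S$, together with superadditivity and monotonicity of the Choquet integral, gives
\[
\mathbf{a}(S)\le \CC_\nu\Big(\sum_{i\in S}\vecc^S_i\Big)\le \CC_\nu(v_\omega(S))=V(S).
\]
Hence $e(S,\vecc^N)=V(S)-\vecx(S)\ge \mathbf{a}(S)-\vecx(S)>0$.

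\emph{Realizing classical payoffs by feasible contracts.} Conversely, suppose $T\subseteq N$ and $\mathbf{b}\in\RR^T$ satisfy $\mathbf{b}(T)\le V(T)$. Define the $T$-contract
\[
\vecc^T_k(\omega)=\mathbf{b}_k+\lambda_k\big(v_\omega(T)-V(T)\big),
\]
with $\lambda_k\ge 0$ and $\sum_{k\in T}\lambda_k=1$. Then $\sum_{k\in T}\vecc^T_k(\omega)=\mathbf{b}(T)+v_\omega(T)-V(T)\le v_\omega(T)$ for every $\omega$, so $\vecc^T$ is feasible. Positive homogeneity and translation invariance give $\CC_\nu(\vecc^T_k)=\mathbf{b}_k+\lambda_k\big(\CC_\nu(v_\omega(T))-V(T)\big)=\mathbf{b}_k$. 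So any classical counterobjection in $V$ lifts to a feasible counterblocking contract.

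\emph{Classical counterobjection.} Assume first $S\neq N$, and pick $i\in S$, $j\notin S$. Then $s_{i,j}(\vecc^N)\ge e(S,\vecc^N)>0$, and the kernel property gives $s_{j,i}(\vecc^N)=s_{i,j}(\vecc^N)$. So there is $T\ni j$ with $i\notin T$ and $e(T,\vecc^N)\ge e(S,\vecc^N)$. Offer $\mathbf{a}_k$ to $k\in T\cap S$ and $\vecx_k$ to $k\in T\setminus S$. Since $i\in S\setminus T$ and $\mathbf{a}_k>\vecx_k$ on $S$, we get
\[
V(T)-\vecx(T)\ge \mathbf{a}(S)-\vecx(S)>\mathbf{a}(T\cap S)-\vecx(T\cap S),
\]
that is, $V(T)>\mathbf{a}(T\cap S)+\vecx(T\setminus S)$. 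The positive slack is given to $j\in T\setminus S$, which makes one inequality strict. The lifting step above then yields a feasible $T$-contract that counterblocks $\vecc^S$, so the blocking is not legitimate.

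\emph{Main obstacle: the case $S=N$.} A blocking by the grand coalition requires $\vecx(N)<\mathbf{a}(N)\le V(N)$. If the kernel contract is efficient in the Choquet sense, meaning $\sum_i\CC_\nu(\vecc^N_i)=V(N)$, this is impossible and we are done. This holds, for instance, when $\nu=\pi$ and $\vecc^N$ is efficient, or when the kernel is taken within efficient contracts, as for the prenucleolus. In general, though, superadditivity only gives $\sum_i\CC_\nu(\vecc^N_i)\le V(N)$. I would first try to handle this by restricting attention to Choquet-efficient grand contracts, or by noting that this is the implicit convention in the definition of the kernel. If that fails, I would try a counterblock by $N$ itself, using the slack $V(N)-\mathbf{a}(N)$ when it is positive. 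I expect this boundary case to be the delicate point of the proof.
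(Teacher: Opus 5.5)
Your argument for a blocking coalition $S\neq N$ is essentially the paper's. Both pick $i\in S$ and $j\notin S$, take $T$ attaining $s_{j,i}(\mathbf{c}^N)=s_{i,j}(\mathbf{c}^N)\ge e(S,\mathbf{c}^N)$, and get a positive slack from feasibility, superadditivity and $i\in S\setminus T$. Both then build a feasible $T$-contract that absorbs the fluctuation $v_\omega(T)-\CC_\nu(v_\omega(T))$ with nonnegative weights. The paper spreads the slack evenly and shifts an $\epsilon$ so that every member of $T\setminus S$ gains strictly; giving all of it to $j$, as you do, is equally valid.

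The case $S=N$ that you flag is a genuine gap, and the paper's proof has it too: it chooses $j\notin\overline{S}$ without noticing that $\overline{S}$ may be $N$. Your fallback, counterblocking by $N$ using the slack $\CC_\nu(v_\omega(N))-\mathbf{a}(N)$, cannot work, because a blocking contract may exhaust that slack. In that case the statement itself fails. Take $N=\{1,2\}$, with $v_\omega(\{1\})=v_\omega(\{2\})=0$ and $v_\omega(N)=2$ for all $\omega$, and let $\mathbf{c}^N\equiv(0,0)$.
\begin{itemize}
\item This contract is feasible, and $s_{1,2}(\mathbf{c}^N)=e(\{1\},\mathbf{c}^N)=0=e(\{2\},\mathbf{c}^N)=s_{2,1}(\mathbf{c}^N)$, so it lies in the ex-ante kernel as defined in the paper.
\item $N$ blocks it with the constant contract $(1,1)$.
\item No coalition counterblocks. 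A singleton $\{i\}$ can secure at most $0<1$ by feasibility and monotonicity. For $S'=N$ one would need $\CC_\nu(\mathbf{c}'_1)+\CC_\nu(\mathbf{c}'_2)>2\ge\CC_\nu(\mathbf{c}'_1+\mathbf{c}'_2)$, which contradicts superadditivity.
\end{itemize}
Statewise efficiency does not rescue the claim when $\nu$ is not additive. Let $\nu$ be vacuous on $\Omega=\{\omega_1,\omega_2\}$, so that $\CC_\nu=\min$. The efficient contract with $\mathbf{c}^N_1(\omega_1)=\mathbf{c}^N_2(\omega_2)=2$ and $\mathbf{c}^N_1(\omega_2)=\mathbf{c}^N_2(\omega_1)=0$ has all Choquet values equal to $0$, and the same argument applies.

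The correct repair is therefore your first option, but stated as an explicit hypothesis rather than a convention. Restrict to grand contracts with $\sum_{i\in N}\CC_\nu(\mathbf{c}^N_i)=\CC_\nu(v_\omega(N))$, i.e.\ $e(N,\mathbf{c}^N)=0$. This holds automatically for efficient contracts when $\nu=\pi$. Under it, your superadditivity estimate rules out blocking by $N$, and your proof is complete. Without such a hypothesis, no proof of the statement exists.
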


\begin{proof}
Let $\textbf{c}^N$ be in the ex-ante kernel. If $\textbf{c}^N$ is in the ex-ante core then it is in the ex-ante bargaining set. If $\textbf{c}^N$ is not in the ex-ante core then there exists a coalition $\overline{S}$ that ex-ante blocks $\textbf{c}^N$. 
Now, let us consider $i\in \overline{S}$ and $j\not\in \overline{S}$ and a coalition $T$ such that $j\in T$, $i\not\in T$ and
$$s_{j,i}(\textbf{c}^N)=e(T,\textbf{c}^N).$$
Since $\textbf{c}^N$ is in the ex-ante kernel we have that
\begin{eqnarray*}
\sigma :=e(T,\textbf{c}^N)&=&s_{j,i}(\textbf{c}^N)\\
&=&s_{i,j}(\textbf{c}^N)\\
&\ge & e(\overline{S},\textbf{c}^N)\\
&=&\mathbb{C}_{\nu}(v_\omega(\overline{S}))-\sum_{k\in\overline{S}}\mathbb{C}_{\nu}(\textbf{c}^{N}_k)\\
&\ge &\mathbb{C}_{\nu}(\sum_{k\in\overline{S}}(\textbf{c}^{\overline{S}}_k))-\sum_{k\in\overline{S}}\mathbb{C}_{\nu}(\textbf{c}^{N}_k)\\
&\ge &\sum_{k\in\overline{S}}\mathbb{C}_{\nu}(\textbf{c}^{\overline{S}}_k)-\sum_{k\in\overline{S}}\mathbb{C}_{\nu}(\textbf{c}^{N}_k)\\
&>& \sum_{k\in T\cap\overline{S}}\mathbb{C}_{\nu}(\textbf{c}^{\overline{S}}_k)-\sum_{k\in T\cap\overline{S}}\mathbb{C}_{\nu}(\textbf{c}^{N}_k)=:\alpha,
\end{eqnarray*}
where we use the feasibility of the $\overline{S}$-contract, the superadditivity
of the Choquet integral, and the fact that $\overline{S}$ blocks $\vecc^N$. Hence $\delta :=\sigma-\alpha>0$.
Let us consider the following $T$-contract for every $\omega\in\Omega$:
\begin{eqnarray*}
\textbf{c}^T_k(\omega) & = & \mathbb{C}_{\nu}(\textbf{c}^{\overline{S}}_k)+\frac{\delta}{|T|}-\epsilon,\;\forall k\in T\cap\overline{S}\\
\textbf{c}^T_k(\omega)& = & \mathbb{C}_{\nu}(\textbf{c}^N_k)+\frac{\delta}{|T|}+\frac{|T\cap \overline{S}|\epsilon+v_\omega(T)-\mathbb{C}_{\nu}(v_\omega(T))}{|T\setminus \overline{S}|},\;\forall k\in T\setminus\overline{S}
\end{eqnarray*}

where $0<\epsilon\leq \frac{\delta}{|T|}$.

The contract $\textbf{c}^T$ is feasible, since for all $\omega\in\Omega$
\begin{eqnarray*}
\sum_{k\in T}\textbf{c}^T_k(\omega) & = & \sum_{k\in T\cap \overline{S}} \mathbb{C}_{\nu}(\textbf{c}^{\overline{S}}_k)+|T\cap \overline{S}|\frac{\delta}{|T|}-|T\cap \overline{S}|\epsilon+\sum_{k\in T\setminus\overline{S}}\mathbb{C}_{\nu}(\textbf{c}^N_k)+|T\setminus\overline{S}|\frac{\delta}{|T|}+|T\cap \overline{S}|\epsilon+v_\omega(T)-\mathbb{C}_{\nu}(v_\omega(T))\\
&=&\sum_{k\in T\cap \overline{S}} \mathbb{C}_{\nu}(\textbf{c}^{\overline{S}}_k)+\sum_{k\in T\setminus\overline{S}}\mathbb{C}_{\nu}(\textbf{c}^N_k)+\delta+v_\omega(T)-\mathbb{C}_{\nu}(v_\omega(T))\\
&=&\mathbb{C}_{\nu}(v_\omega(T))+v_\omega(T)-\mathbb{C}_{\nu}(v_\omega(T))=v_\omega(T)
\end{eqnarray*}
and, by construction, we have

\begin{align*}
  \mathbb{C}_{\nu}(\textbf{c}^{T}_k) & = \CC_\nu(\vecc^{\overline{S}}_k) +
  \frac{\delta}{|T|} - \epsilon
  \ge\mathbb{C}_{\nu}(\textbf{c}^{\overline{S}}_k), \quad \forall k\in T\cap \overline{S}\\
  \mathbb{C}_{\nu}(\textbf{c}^{T}_k) & = \CC_\nu(\vecc^N_k)+\frac{\delta}{|T|} + \frac{|T\cap\overline{S}|}{|T\setminus\overline{S}|}\epsilon  >\mathbb{C}_{\nu}(\textbf{c}_k^N),\quad \forall k\in T\setminus \overline{S}. 
\end{align*}

Then the coalition $T$ ex-ante counterblocks $\textbf{c}^{\overline{S}}$, proving that $\textbf{c}^N$ is in the ex-ante bargaining set.
\end{proof}

We can draw two important conclusions of these series of results: First, the
ex-ante bargaining set, the prenucleolus and the kernel always exist. Second,
they may not be compact sets, unlike the classical case. Note that this is also the
case for the ex-ante core.

\section{Convex Bel coalitional games}\label{sec:conv}
The convexity property of Bel coalitional games has been defined in \cite{GL2025}. We
say that
$(N,\Omega,m_{\nu},(v_\omega)_{\omega\in\Omega})$ is {\it ex-ante convex} if for all $S,T\subseteq N$
$$\mathbb{C}_{\overline{\nu}}(v_\omega(S))+\mathbb{C}_{\overline{\nu}}(v_\omega(T))\le
\mathbb{C}_{\nu}(v_\omega(S\cup T))+\mathbb{C}_{\nu}(v_\omega(S\cap T)).$$

First, we show that when the Bel coalitional game is ex-ante convex, the Shapley contract is in the ex-ante core.

\begin{theorem}
Let us consider a Bel coalitional game $(N,\Omega,m_{\nu},(v_\omega)_{\omega\in\Omega})$. If it is ex-ante convex, then the Shapley contract is in the ex-ante core of the game.
\end{theorem}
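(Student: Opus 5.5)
We plan to use Theorem~\ref{th:conv}: since all priors are identical, it suffices to show that for every $S\subseteq N$,
\[
\CC_\nu(v_\omega(S))\le \sum_{i\in S}\CC_\nu\big((\vecc^N_{Sh})_i\big),
\]
together with the feasibility of the Shapley contract. Feasibility is immediate, because $\phi$ is efficient in each state: $\sum_{i\in N}\phi_i(v_\omega)=v_\omega(N)$ for every $\omega$.

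The core of the argument is a chain of inequalities through marginal vectors. Write $\phi(v_\omega)=\frac{1}{n!}\sum_\sigma m^\sigma(v_\omega)$, where $m^\sigma_i(v_\omega)=v_\omega(P^\sigma_i\cup\{i\})-v_\omega(P^\sigma_i)$ and $P^\sigma_i$ is the set of predecessors of $i$ in the order $\sigma$. Superadditivity and positive homogeneity of the Choquet integral with respect to $\nu$ give
\[
\CC_\nu(\phi_i(v_\omega))\ge \frac{1}{n!}\sum_\sigma \CC_\nu\big(m^\sigma_i(v_\omega)\big).
\]
It therefore suffices to show, for each fixed order $\sigma$ and each $S$, that
\[
\sum_{i\in S}\CC_\nu\big(m^\sigma_i(v_\omega)\big)\ge \CC_\nu(v_\omega(S)).
\]
We then average over $\sigma$. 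This reduces the problem to the classical fact that marginal vectors of convex games lie in the core, adapted to the Choquet integral.

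For a fixed order $\sigma$, we order $S=\{i_1,\ldots,i_s\}$ according to $\sigma$ and set $S_k=\{i_1,\ldots,i_k\}$. The classical proof applies convexity to the pair $T=S_{k-1}\cup P^\sigma_{i_k}=P^\sigma_{i_k}$ and $S_k$. Here we must first lower bound each marginal term with a Choquet integral. By superadditivity and asymmetry, $\CC_\nu(X-Y)\ge \CC_\nu(X)+\CC_\nu(-Y)=\CC_\nu(X)-\CC_{\overline\nu}(Y)$, so
\[
\CC_\nu\big(m^\sigma_{i_k}(v_\omega)\big)\ge \CC_\nu\big(v_\omega(P^\sigma_{i_k}\cup\{i_k\})\big)-\CC_{\overline\nu}\big(v_\omega(P^\sigma_{i_k})\big).
\]
Ex-ante convexity applied to $A=P^\sigma_{i_k}\cup\{i_k\}$ and $B=S_k$, with $A\cup B=A$ and $A\cap B\supseteq S_{k-1}\cup\{i_k\}$, should compare these terms with $\CC_\nu(v_\omega(S_k))-\CC_{\overline\nu}(v_\omega(S_{k-1}))$.

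The main obstacle is telescoping. The mixed $\nu$/$\overline\nu$ terms do not cancel directly, because $\CC_\nu\le\CC_{\overline\nu}$ and the gaps have the wrong sign. To close the chain, we would first try to show that ex-ante convexity forces the strong inequality $\CC_{\overline\nu}(v_\omega(S))+\CC_{\overline\nu}(v_\omega(T))\le \CC_\nu(v_\omega(S\cup T))+\CC_\nu(v_\omega(S\cap T))$ on nested pairs. Taking $S=T$ gives $2\CC_{\overline\nu}(v_\omega(S))\le \CC_\nu(v_\omega(S))+\CC_\nu(v_\omega(S))$, so $\CC_\nu(v_\omega(S))=\CC_{\overline\nu}(v_\omega(S))$ for every $S$. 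Thus under ex-ante convexity each random variable $v_\omega(S)$ has equal lower and upper expectation, and the expected game $V(S)=\CC_\nu(v_\omega(S))$ is a classical convex game. With this observation the mixed terms become genuine $V$-values and the telescoping gives $\sum_{i\in S}\CC_\nu(m^\sigma_i)\ge V(S)$, as in the classical case. If the marginal-difference bound turns out to be too lossy, the fallback is to compare $\CC_\nu(\phi_i(v_\omega))$ directly with $\phi_i(V)$. This would use superadditivity and the identity $\CC_\nu=\CC_{\overline\nu}$ on each $v_\omega(S)$, together with the Shapley value of the convex game $V$ lying in $C(V)$.
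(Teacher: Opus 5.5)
Your argument is correct. It shares the paper's skeleton: write the Shapley value as the average of marginal vectors, use superadditivity and positive homogeneity to get $\CC_\nu(\phi_i(v_\omega))\ge\frac{1}{n!}\sum_\sigma\CC_\nu(m^\sigma_i(v_\omega))$, then apply Theorem~\ref{th:conv}. Where you differ is the key step. The paper does not prove the per-order inequality $\sum_{i\in S}\CC_\nu(m^\sigma_i(v_\omega))\ge\CC_\nu(v_\omega(S))$ itself. It imports it from Theorem~12 of \cite{GL2025}, phrased as ``marginal vectors lie in the ex-ante core''. Taken literally, that would not suffice, since Theorem~\ref{th:conv} is only a sufficient condition.

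You prove exactly the needed inequality in a self-contained way:
\begin{enumerate}
\item Taking $S=T$ in the definition of ex-ante convexity and using $\CC_\nu\le\CC_{\overline\nu}$ gives $\CC_\nu(v_\omega(S))=\CC_{\overline\nu}(v_\omega(S))$ for all $S$.
\item Hence $V(S)=\CC_\nu(v_\omega(S))$ is a classical convex game.
\item Superadditivity plus asymmetry give $\CC_\nu(m^\sigma_i(v_\omega))\ge m^\sigma_i(V)$.
\item The classical fact that marginal vectors of a convex game lie in its core finishes the argument.
\end{enumerate}
Besides not depending on \cite{GL2025}, your route shows how restrictive ex-ante convexity is as defined here. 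The equality $\CC_\nu(X)=\CC_{\overline\nu}(X)$ forces $X$ to be constant on every focal element of $\nu$. So $\CC_\nu$ is additive on the relevant random variables, and in fact $\CC_\nu(\phi_i(v_\omega))=\phi_i(V)$, which is your fallback. The theorem therefore reduces to the classical result for $V$.

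There is one slip in your intermediate paragraph. The pair $A=P^\sigma_{i_k}\cup\{i_k\}$, $B=S_k$ is nested ($B\subseteq A$), so it carries no supermodularity information. Your claim that the $\nu/\overline\nu$ gaps have the wrong sign is also mistaken. Take instead the classical pair $A=P^\sigma_{i_k}$, $B=S_k$, whose union is $P^\sigma_{i_k}\cup\{i_k\}$ and whose intersection is $S_{k-1}$. Ex-ante convexity then gives
\[
\CC_\nu\big(v_\omega(P^\sigma_{i_k}\cup\{i_k\})\big)-\CC_{\overline\nu}\big(v_\omega(P^\sigma_{i_k})\big)\ge\CC_{\overline\nu}\big(v_\omega(S_k)\big)-\CC_\nu\big(v_\omega(S_{k-1})\big)\ge\CC_\nu\big(v_\omega(S_k)\big)-\CC_\nu\big(v_\omega(S_{k-1})\big).
\]
This telescopes directly to $\CC_\nu(v_\omega(S))$, because the gap $\CC_{\overline\nu}-\CC_\nu\ge 0$ works in your favour. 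So the degeneracy observation is not even needed.
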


\begin{proof}
Let us note that the Shapley value of the game $v_\omega$ can be written as
$$\phi(v_\omega) = \frac{1}{n!}\sum_{\sigma\in\textswab{G}(N)} m^{\sigma,v_\omega},$$
where $m^{\sigma,v_\omega}$ denotes the marginal vector of $v_\omega$ for permutation $\sigma$. In Theorem 12 of \cite{GL2025} we prove that $m^{\sigma,v_\omega}$ is in the ex-ante core of the game for all $\sigma\in\textswab{G}(N)$. Now, let us fix a $S\subseteq N$, we have that
\begin{eqnarray*}
\sum_{i\in S}\mathbb{C}_{\nu}(\textbf{c}^N_{Sh,i})&=&\sum_{i\in S}\mathbb{C}_{\nu}(\frac{1}{n!} \sum_{\sigma\in\textswab{G}(N)} m^{\sigma,v_\omega}_i)\\
&=&\frac{1}{n!}\sum_{i\in S}\mathbb{C}_{\nu}(\sum_{\sigma\in\textswab{G}(N)} m^{\sigma,v_\omega}_i)\\
&\ge & \frac{1}{n!}\sum_{i\in S}\sum_{\sigma\in\textswab{G}(N)}\mathbb{C}_{\nu}(m^{\sigma,v_\omega}_i)\\
&=&\frac{1}{n!}\sum_{\sigma\in\textswab{G}(N)}\sum_{i\in S}\mathbb{C}_{\nu}(m^{\sigma,v_\omega}_i)\\
&\ge &\frac{1}{n!}\sum_{\sigma\in\textswab{G}(N)}\mathbb{C}_{\nu}(v_\omega(S))\\
&=&\frac{1}{n!}n!\,\mathbb{C}_{\nu}(v_\omega(S))=\mathbb{C}_{\nu}(v_\omega(S)),
\end{eqnarray*}
where, in order, we use the asymmetry and the superadditivity property of the
Choquet integral, we change the order of summation and finally we use the fact
that $m^{\sigma,v_\omega}$ is in the ex-ante core of the game for all
$\sigma\in\textswab{G}(N)$ and Theorem~\ref{th:conv}. We conclude from the above
inequality and Theorem~\ref{th:conv} again that the Shapley contract belongs to
the ex-ante core.
\end{proof}

\medskip

It is well-known that for classical cooperative games, the bargaining set in the
sense of Davis and Maschler coincides with the core for convex games. Up to our
knowedge, this property has not been proved for the bargaining set in the sense
of Mas-Colell. Dutta et al. \cite{dutta1989} have shown this property for their
notion of consistent bargaining set in the framework of NTU games, where
``consistent'' means that a counterobjection must be valid, in the sense that it
cannot be itself counterobjected, and so on. As a result, their consistent bargaining set
is  included in the bargaining set of Mas-Colell, as it is more difficult to
counterobject objections.

What we propose below is also a strengthening of our original definition,
resulting in a smaller bargaining set, which
makes true the coincidence of the core and the bargaining set for convex
games. In a sense, it is closer to the definition of Davis and Maschler, as we
will explain.

\begin{definition}
The {\it ex-ante strong bargaining set} of a Bel coalitional game is the set of all feasible grand contracts $\textbf{c}^N$ for which every ex-ante blocking coalition $S\subseteq N$ admits a counterblocking coalition $S'$ to the contract $\textbf{c}^{S}$ such that $S'\not\subseteq S$.
\end{definition}

Recall that in the definition of Davis and Maschler, objections
  are defined w.r.t. players: $i$ has an objection against $j$ if there is a
  coalition $S$ containing $i$ but not $j$ and an imputation on $S$ making all
  players in $S$ better off. Then a counterobjection of $j$ against $i$ amounts
  to find a coalition $T$ containing $j$ but not $i$, and an imputation better
  for players in $T$. As a consequence, it can never be the case that $T$ is
  included in $S$. This is exactly what we propose: the coalition which makes a
  counterobjection cannot be a subset of the coalition making an objection.
  
\begin{theorem}\label{thcorebarg}
Let us consider a Bel coalitional game $(N,\Omega,m_{\nu},(v_\omega)_{\omega\in\Omega})$, with $\nu=\pi$. If it is ex-ante convex, then the ex-ante strong bargaining set coincide with the ex-ante core.
\end{theorem}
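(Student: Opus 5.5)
With $\nu=\pi$, all Choquet integrals reduce to expectations. The ex-ante convexity condition then says that the expected game $V(S)=\EE_\pi(v_\omega(S))$ is a classical convex TU-game. By Theorem~\ref{thprob}, $\vecc^N$ lies in the ex-ante core iff $\vecx:=(\EE_\pi(\vecc^N_i))_{i\in N}$ satisfies $\vecx(S)\ge V(S)$ for all $S$ (feasibility together with full support forces efficiency).

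The inclusion of the core in the strong bargaining set is immediate: a core contract has no blocking coalition, so the defining condition holds vacuously. The work is the converse. I will show that every feasible grand contract outside the ex-ante core has a blocking coalition $S$ together with a blocking $S$-contract $\vecc^S$ admitting \emph{no} counterblocking coalition $S'\not\subseteq S$. I will reduce everything to the vector $\vecx$ of expected payoffs and the convex game $V$. The blocking and counterblocking conditions involve only the expected payoffs of the contracts. Conversely, any vector $\mathbf{z}\in\RR^S$ with $\mathbf{z}(S)\le V(S)$ is the vector of expectations of some feasible $S$-contract: take $\vecc^S_k(\omega)=\mathbf{z}_k+\bigl(v_\omega(S)-V(S)\bigr)/|S|$. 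Hence the problem becomes a statement about the classical game $V$ and the payoff vector $\vecx$, with $\vecx(N)\le V(N)$.

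The main step mimics the classical proof that the Davis–Maschler bargaining set equals the core for convex games. Suppose $\vecx$ is not in $C(V)$. If $\vecx(N)<V(N)$, let $N$ block with an expectation vector $\mathbf{z}$ satisfying $\mathbf{z}(N)=V(N)$ and $\mathbf{z}>\vecx$. No $S'\not\subseteq N$ exists, so this blocking is not counterblocked in the strong sense, and $\vecc^N$ is excluded. Otherwise $\vecx$ is efficient, and I choose $S$ maximizing the excess $e(S)=V(S)-\vecx(S)>0$, and among those a maximal one by inclusion. Supermodularity of $e$, which follows from the convexity of $V$ and the additivity of $\vecx$, then makes the union of maximizers a maximizer, so $S$ can be taken to be the unique largest coalition with maximal excess. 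The blocking vector is $\mathbf{z}_k=\vecx_k+e(S)/|S|$ for $k\in S$. Suppose $S'\not\subseteq S$ counterblocks with $\mathbf{z}'$, so that $\mathbf{z}'(S')\le V(S')$. Then
\[
V(S')\ge \mathbf{z}'(S')\ge \vecx(S'\setminus S)+\mathbf{z}(S'\cap S),
\]
with at least one inequality strict, i.e.\ $e(S')> \frac{|S\cap S'|}{|S|}e(S)$ or the weak version. I combine this with supermodularity,
\[
e(S\cup S')\ge e(S)+e(S')-e(S\cap S'),
\]
and with $e(S\cup S')<e(S)$, which holds because $S\cup S'\supsetneq S$ and $S$ is the largest maximizer. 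Together these give $e(S')<e(S\cap S')$.

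The obstacle is closing this into a contradiction. I need a lower bound on $e(S')$ relative to $e(S\cap S')$, and the uniform distribution of the surplus is probably not the right blocking vector. I would instead choose $\mathbf{z}$ through a marginal vector of the convex subgame $V|_S$ shifted by $\vecx$: take an ordering of $S$ that lists $S\cap S'$-type subsets first, so that $\mathbf{z}(T)-\vecx(T)\ge e(T)$ for all $T\subseteq S$ with equality on initial segments. Since $S'$ is arbitrary, the cleaner route is to take $\mathbf{z}$ in the core of the subgame $T\mapsto e(T)$ on $S$ (which is convex), giving $\mathbf{z}(T)-\vecx(T)\ge e(T)$ for every $T\subseteq S$, with strict slack obtained by a small perturbation. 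Then $e(S')\ge \mathbf{z}(S'\cap S)-\vecx(S'\cap S)\ge e(S\cap S')$, with one inequality strict, which contradicts $e(S')<e(S\cap S')$. Getting the strictness right, and ensuring $\mathbf{z}>\vecx$ on all of $S$, is the delicate point. I would handle it by allocating $e(S)>0$ via a convex-core element of the excess game plus an $\varepsilon$-equal share, using the maximality of $S$ to absorb $\varepsilon$.
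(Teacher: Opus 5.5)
Your plan has the same architecture as the paper's proof. The excess $e(T)=V(T)-\vecx(T)$ is supermodular. You take $S$ of maximal excess and maximal by inclusion, and block with $\mathbf{z}=\vecx+\mathbf{y}$, where $\mathbf{y}$ is a core element of the excess game on $S$ plus an $\varepsilon$-correction. You then rule out any counterblock $S'\not\subseteq S$ via $e(S')\le e(S\cap S')-\bigl(e(S)-e(S\cup S')\bigr)$. Your reduction to expectation vectors and your separate treatment of $\vecx(N)<V(N)$ are correct.

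The genuine gap is the choice of $\mathbf{y}$. A core element of $e|_S$ only satisfies $\mathbf{y}_i\ge e(\{i\})$, and this lower bound can be very negative. So $\mathbf{z}_i<\vecx_i$ may hold by a fixed margin, in which case $(S,\mathbf{z})$ is not a blocking at all, and no $\varepsilon$-share can fix that. Concretely, take a single state, $\vecx=\mathbf{0}$ and $V=e$ on $N=\{1,2,3\}$ with $e(\{1\})=-5$, $e(\{2\})=1$, $e(\{3\})=-2$, $e(\{1,2\})=2$, $e(\{1,3\})=-7$, $e(\{2,3\})=-1$, $e(N)=0$. This game is convex, $S=\{1,2\}$, and the marginal vector $(-5,7)$ lies in the core of $e|_S$.

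The missing idea is to take $\mathbf{y}\ge 0$. The paper obtains this from the monotonic cover $\hat e(T)=\max_{R\subseteq T}e(R)$. The game $\hat e$ is again convex, and $\hat e(S)=e(S)$ by maximality of $S$. Any core element of $\hat e|_S$ then satisfies $\mathbf{y}(T)\ge \hat e(T)\ge e(T)$ for all $T\subseteq S$, and $\mathbf{y}_i\ge \hat e(\{i\})\ge e(\emptyset)=0$.

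Zero coordinates can still be forced, namely whenever some $T\subsetneq S$ also has excess $e(S)$. This also shows that strict slack for every $T\subseteq S$ is impossible. So the $\varepsilon$-step must deliberately violate some core constraints:
pick $k$ with $\mathbf{y}_k>0$, which exists since $\mathbf{y}(S)=e(S)>0$; add $\varepsilon$ to every other member of $S$; and remove $(|S|-1)\varepsilon$ from $k$.
Then $\mathbf{z}(T)-\vecx(T)\ge e(T)-\varepsilon|S\setminus T|$ for all $T\subseteq S$. Choose $\varepsilon<\mathbf{y}_k/(|S|-1)$ and $\varepsilon|S\setminus S'|\le e(S)-e(S\cup S')$ for all $S'\not\subseteq S$; the latter is possible because $e(S\cup S')<e(S)$. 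With these choices, your inequality $e(S')>\mathbf{z}(S\cap S')-\vecx(S\cap S')$ contradicts the supermodularity bound. With these two ingredients your sketch becomes the paper's proof.
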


\begin{proof}
By definition, we know that the ex-ante core is included in ex-ante strong bargaining set,
and we therefore need to show that the ex-ante core contains the ex-ante strong bargaining
set. By contradiction, let us assume that there exists a feasible grand contract
$\textbf{c}^N$ in the ex-ante strong bargaining set that is not in the ex-ante
core. Since $\textbf{c}^N$ is fixed throughout the proof, we denote the ex-ante
excess of the coalition $S$ at $\textbf{c}^N$ by $e(S):=e(S,\textbf{c}^N)$. As
this defines a mapping from $2^N$ to $\RR$, we may consider $e$ as a (classical)
coalitional game on
$N$, which we denote by $(N,e)$. We claim that
$(N,e)$ is convex. Indeed, for every $A,B\subseteq N$,
\begin{eqnarray*}
e(A)+e(B)&=&\mathbb{C}_{\pi}(v_\omega(A))-\sum_{i\in A}\mathbb{C}_{\pi}(\textbf{c}^N_{i})+\mathbb{C}_{\pi}(v_\omega(B))-\sum_{i\in B}\mathbb{C}_{\pi}(\textbf{c}^N_{i})\\
&= &\mathbb{E}_{\pi}(v_\omega(A))+\mathbb{E}_{\pi}(v_\omega(B))-\sum_{i\in A\cup B}\mathbb{E}_{\pi}(\textbf{c}^N_{i})-\sum_{i\in A\cap B}\mathbb{E}_{\pi}(\textbf{c}^N_{i})\\
&\le & \mathbb{E}_{\pi}(v_\omega(A\cup B))+\mathbb{E}_{\pi}(v_\omega(A\cap B))-\sum_{i\in A\cup B}\mathbb{E}_{\pi}(\textbf{c}^N_{i})-\sum_{i\in A\cap B}\mathbb{E}_{\pi}(\textbf{c}^N_{i})\\
&=&e(A\cup B)+e(A\cap B),
\end{eqnarray*}
where we have used the ex-ante convexity of the Bel coalitional game. Next, we
consider the coalitional game $(N,\hat{e})$ that is the monotonic cover of the
game $(N,e)$, i.e.,
$$\hat{e}(S)=\max_{R\subseteq S} e(R).$$
 $(N,\hat{e})$ is a convex game, since the monotonic cover of any
convex game is a convex game (see, e.g., \cite{ZMS}). Now, from among all the
coalitions that have maximal ex-ante excess for $\textbf{c}^N$, we choose $S^*$
which is maximal w.r.t. set inclusion, i.e.,
\begin{eqnarray*}
e(S)& \le & e(S^*),\; \forall S\subseteq N\\
e(S)& < & e(S^*),\; \forall S: S^*\subset S\subseteq N
\end{eqnarray*}
Since $\textbf{c}^N$ is not in the ex-ante core, there exists a coalition
$\overline{S}$ such that 
$$e(\overline{S})=\mathbb{E}_{\pi}(v_\omega(\overline{S}))-\sum_{i\in \overline{S}}\mathbb{E}_{\pi}(\textbf{c}^N_i)>0$$
By the fact that $e(S^*)$ is maximal we have that
\begin{equation*}
\hat{e}(S^*)=\max_{R\subseteq S^*} e(R)= e(S^*)\geq e(\overline{S})>0
\end{equation*}
Then, let us consider the subgame $(S^*,\hat{e})$ of the coalitional game
$(N,\hat{e})$, that is convex as a subgame of a convex game. Hence, the core of
$(S^*,\hat{e})$ is nonempty, and there exists a payoff vector $\textbf{x}$ in
the core of $(S^*,\hat{e})$, that is, satisfying
\begin{eqnarray}\label{eqx}
\sum_{i\in S^*}\textbf{x}_i & = & \hat{e}(S^*)=e(S^*)\nonumber\\
\sum_{i\in R}\textbf{x}_i & \ge & \hat{e}(R) \ge e(R),\;\forall R\subset S^*
\end{eqnarray}
In particular, for $R=\{i\}$, $\textbf{x}_i\geq\hat{e}(\{i\})\ge e(\emptyset)=0$,
i.e., every $\textbf{x}$ in the core of $(S^*,\hat{e})$ is a nonnegative vector. Since $\sum_{i\in S^*}\textbf{x}_i = \hat{e}(S^*)>0$, there is an agent $k\in S^*$ such that $\textbf{x}_k>0$. We define the following contract for every $\omega\in\Omega$:
\begin{eqnarray*}
\textbf{c}^{S^*}_j(\omega)& = &
\mathbb{E}_{\pi}(\textbf{c}^N_j)+\textbf{x}_j+\epsilon,\ \forall j\in S^*\setminus \{k\}\\
\textbf{c}^{S^*}_k(\omega) & = & \mathbb{E}_{\pi}(\textbf{c}^N_k)+\textbf{x}_k-\epsilon(|S^*|-1)+v_\omega(S^*)-\mathbb{E}_{\pi}(v_\omega(S^*)),
\end{eqnarray*}
with $\epsilon$ satisfying
\[
0<\epsilon<\min\Big(\frac{\vecx_k}{|S^*|-1},\epsilon'\Big)
\]
and
\[
\epsilon'=\min_{S:S\setminus S^*\neq\emptyset, S^*\setminus
  S\neq\emptyset}\Big(\frac{e(S^*)-e(S^*\cup S)}{|S^*\setminus S|}\Big),
\]
which is positive, assuming $S^*\neq N$ (when $S^*=N$, simply take
$\epsilon<\vecx_k/(|S^*|-1)$). 

1. Let us prove that $\vecc^{S^*}$ is feasible (and even efficient). We have:
\begin{eqnarray*}
\sum_{i\in S^*}\textbf{c}^{S^*}_i(\omega)&=&\sum_{i\in S^*}\mathbb{E}_{\pi}(\textbf{c}^N_i)+\sum_{i\in S^*}\textbf{x}_i+v_\omega(S^*)-\mathbb{E}_{\pi}(v_\omega(S^*))\\
&=&\sum_{i\in S^*}\mathbb{E}_{\pi}(\textbf{c}^N_i)+e(S^*)+v_\omega(S^*)-\mathbb{E}_{\pi}(v_\omega(S^*))\\
&=&\sum_{i\in S^*}\mathbb{E}_{\pi}(\textbf{c}^N_i)+\mathbb{E}_{\pi}(v_\omega(S^*))-\sum_{i\in S^*}\mathbb{E}_{\pi}(\textbf{c}^N_i)+v_\omega(S^*)-\mathbb{E}_{\pi}(v_\omega(S^*))\\
&=&v_\omega(S^*),
\end{eqnarray*}

2. Let us prove that $\textbf{c}^{S^*}$ ex-ante blocks $\textbf{c}_{S^*}^N$. We have: 
\begin{itemize}
\item for any $j\in S^*\setminus \{k\}$,
  \[
\mathbb{E}_\pi(\textbf{c}^{S^*}_j) = \EE_\pi(\vecc^N_j)+\vecx_j+\epsilon>\EE_\pi(\vecc^N_j).
  \]
\item for $k\in S^*$,
  \[
\mathbb{E}_\pi(\textbf{c}^{S^*}_k) =
\EE_\pi(\vecc^N_k)+\vecx_k-\epsilon(|S^*|-1)+\EE_\pi(v_\omega(S^*))-\EE_\pi(v_\omega(S^*))
>\EE_\pi(\vecc^N_k)
  \]
by assumption on $\epsilon$. 
Hence $\vecc^{S^*}$ ex-ante blocks $\textbf{c}^N$.
\end{itemize}

3. Since $\textbf{c}^N$ belongs to the ex-ante strong bargaining set, there must exist a counterblocking coalition $S'$ to the contract $\textbf{c}^{S^*}$ such that $S'\not\subseteq S^*$. Let us show
that this is not possible.

3.a. Assume $k\not\in S'\cap S^*$. We have:

\begin{eqnarray}
\sum_{i\in S'}\mathbb{E}_{\pi}(\textbf{c}^{S'}_i)&= &\mathbb{E}_{\pi}\Big(\sum_{i\in S'}\textbf{c}^{S'}_i\Big)\\ \label{eq1}
&\le & \mathbb{E}_{\pi}(v_\omega(S'))\\ \label{eq2}
&=& e(S')+\sum_{i\in S'}\mathbb{E}_{\pi}(\textbf{c}^N_i)\\ \label{eq3}
&\leq & e(S')+e(S^*)-e(S'\cup S^*)+\sum_{i\in S'}\mathbb{E}_{\pi}(\textbf{c}^N_i)\\ \label{eq4}
&\le & e(S'\cap S^*)+\sum_{i\in S'}\mathbb{E}_{\pi}(\textbf{c}^N_i)\\ \label{eq5}
&\leq&\sum_{i\in S'\cap S^*}\textbf{x}_i+\sum_{i\in S'}\mathbb{E}_{\pi}(\textbf{c}^N_i)\\ \label{eq6}
&=&\sum_{i\in S'\cap S^*}\vecc^{S^*}_i(\omega)-\sum_{i\in S'\cap
  S^*}\EE_\pi(\vecc^N_i) - |S'\cap S^*|\epsilon+\sum_{i\in
  S'}\EE_\pi(\vecc^N_i),\forall \omega\in\Omega\\ \label{eq7}
&=& \sum_{i\in S'\cap S^*}\EE_\pi(\vecc^{S^*}_i)+\sum_{i\in S'\setminus
  S^*}\EE_\pi(\vecc^N_i) - |S'\cap S^*|\epsilon\\ \label{eq8}
&\leq& \sum_{i\in S'\cap S^*}\EE_\pi(\vecc^{S^*}_i)+\sum_{i\in S'\setminus
  S^*}\EE_\pi(\vecc^N_i)
\end{eqnarray}
Equation~\eqref{eq1} follows from feasibility, Equation~\eqref{eq2} holds from
the definition of $e(S')$, Equation~\eqref{eq3} holds by definition of $S^*$ and
by the fact that $S'\cup S^*\supseteq S^*$, Equation~\eqref{eq4}
holds by convexity of $(S^*, e)$, Equation~\eqref{eq5} holds from
\eqref{eqx}, Equations~\eqref{eq6} and \eqref{eq7} hold from the definition of
$\textbf{c}^{S^*}$.
 It follows that $S^*$ legitimate ex-ante blocks
$\textbf{c}^N$ and this is a contradiction.

3.b. Assume that $k\in S'\cap S^*$. As above, we have
\[
\sum_{i\in
  S'}\mathbb{E}_{\pi}(\textbf{c}^{S'}_i)=\mathbb{E}_{\pi}\Big(\sum_{i\in
  S'}\textbf{c}^{S'}_i\Big) 
\le  \mathbb{E}_{\pi}(v_\omega(S'))
= e(S')+\sum_{i\in S'}\mathbb{E}_{\pi}(\textbf{c}^N_i).
\]
By definition of $\epsilon$, and since $S'\not\subseteq S^*$, we have:
\begin{eqnarray*}
  \sum_{i\in S'}\mathbb{E}_{\pi}(\textbf{c}^{S'}_i)+\epsilon|S^*\setminus S'|
  &\leq & e(S') + e(S^*) - e(S'\cup S^*) +\sum_{i\in
    S'}\mathbb{E}_{\pi}(\textbf{c}^N_i)\\
  &\leq & e(S'\cap S^*) + \sum_{i\in S'}\EE_\pi(\vecc^N_i)\\
  &\leq & \sum_{i\in S'\cap S^*}\vecx_i + \sum_{i\in S'}\EE_\pi(\vecc^N_i)\\
  &=& \sum_{i\in S'\cap S^*\setminus k}\vecc^{S^*}_i(\omega) - \sum_{i\in S'\cap
    S^*\setminus k}\EE_\pi(\vecc^N_i)-\epsilon(|S'\cap
  S^*|-1)+\vecc^{S^*}_k(\omega)\\ &&-\EE_\pi(\vecc^N_k)
  +\epsilon(|S^*|-1)-v_\omega(S^*)+\EE_\pi(v_\omega(S^*)) + \sum_{i\in S'}\EE_\pi(\vecc^N_i),
\end{eqnarray*}
for all $\omega\in \Omega$. Hence, taking expectation, we get
\[
\sum_{i\in S'}\mathbb{E}_{\pi}(\textbf{c}^{S'}_i)+\epsilon|S^*\setminus S'| \leq
\sum_{i\in S'\cap S^*}\EE_\pi(\vecc^{S^*}_i) + \sum_{i\in S'\setminus
  S^*}\EE_\pi(\vecc^N_i) + \epsilon|S^*\setminus S'|,
\]
hence,  $S^*$ legitimate ex-ante blocks $\textbf{c}^N$ , a contradiction.

\end{proof}

\section{Conclusion}
This article deepens the study of Bel coalitional games, by exploring the geometrical structure of the ex-ante core and the
properties of other solution concepts. We prove that
the ex-ante core is an affine space when there is a common prior which is a
probability distribution. Then, we define the ex-ante (pre)nucleolus, kernel and
bargaining set and we show that the relation between these three solutions is
the same as for classical coalitional games. Finally, we provide a stronger
definition of the ex-ante bargaining set and we show that it coincides with the
core for convex Bel coalitional games.

Much remains to be done for a complete exploration of the solution concepts of
Bel coalitional games. In particular, an interesting line of future research
concerns the study of the geometrical structure of the core in the general case
when agents have ambiguous knowledge modeled by a belief function. This means
also finding necessary and sufficient conditions for a grand contract to belong
to the core when agents have different and ambiguous knowledge, i.e. when priors
are not probability distributions but belief functions. It would be interesting
also to study when in the general case, the core of the expected game is not
part of the ex-ante core. Our result shows that this holds true with
probabilistic identical priors, but we suspect that this is not always true in
the general case. The same question can be asked for the ex-ante bargaining set
and the ex-ante kernel (for the ex-ante prenucleolus, we have proved that the
prenucleolus of the expected game can be seen as a certain grand contract, which
belongs to the ex-ante prenucleolus).
As for the bargaining set, it would be interesting to know if our notion of
strong bargaining set is the weakest possible to make true the coincidence of
the core and the bargaining set for convex games. Lastly, some applications to
voting games may be of interest, in particular in what concerns the bargaining
set.

\bmhead{Acknowledgements}
The first author acknowledges the financial support by the European Union HORIZON-WIDERA-2023-TALENTS-01 project No 101183743 (AGATE). The second author is member of the GNAMPA-INdAM research group.



\end{document}